\theoremstyle{plain}
\newtheorem{theorem}{Theorem}[]
\newtheorem{lemma}[theorem]{Lemma}
\newtheorem{corollary}[theorem]{Corollary}
\newtheorem{claim}[theorem]{Claim}
\newcommand{\bb}[1]{\mathbb{#1}}
\newcommand{\cH}{\mathcal{H}}
\newcommand{\cB}{\mathcal{B}}
\newcommand*{\proj}[1]{|#1 \rangle\langle #1|}
\DeclareMathOperator{\diag}{diag}
\newcommand{\cat}{\mathrm{cat}}
\newcommand{\GKP}{\mathrm{GKP}}
\renewcommand{\braket}[1]{\langle#1\rangle}
\newcommand*{\tr}{\mathsf{tr}}
\newcommand{\subalign}[1]{%
  \vcenter{%
    \Let@ \restore@math@cr \default@tag
    \baselineskip\fontdimen10 \scriptfont\tw@
    \advance\baselineskip\fontdimen12 \scriptfont\tw@
    \lineskip\thr@@\fontdimen8 \scriptfont\thr@@
    \lineskiplimit\lineskip
    \ialign{\hfil$\m@th\scriptstyle##$&$\m@th\scriptstyle{}##$\hfil\crcr
      #1\crcr
    }%
  }%
}
\begin{document}

\author{Beatriz Dias}
\author{Robert K{\"o}nig}
\affiliation{Department of Mathematics, School of Computation, Information and Technology, Technical University of Munich, Garching, Germany}
\affiliation{Munich Center for Quantum Science and Technology, Munich, Germany}
\date{\today}

\title{On the sampling complexity of coherent superpositions}

\begin{abstract}
We consider the problem of sampling from the distribution of measurement outcomes when applying a POVM to a superposition~$\ket{\Psi} = \sum_{j=0}^{\chi-1} c_j \ket{\psi_j}$ of~$\chi$ pure states. We relate this problem to that of drawing samples from the outcome distribution when measuring a single state~$\ket{\psi_j}$ in the superposition. Here~$j$ is drawn from the distribution~$p(j)=|c_j|^2/\|c\|^2_2$ of normalized amplitudes. We give an algorithm which -- given~$O(\chi \|c\|_2^2 \log1/\delta)$ such samples and calls to oracles evaluating the involved probability density functions -- outputs a sample from the target distribution except with probability at most~$\delta$. 

In many cases of interest, the POVM and individual states in the superposition have efficient classical descriptions  allowing to evaluate matrix elements of POVM elements and to draw samples from outcome distributions. In such a scenario, our algorithm gives a reduction from strong classical simulation (i.e., the problem of computing outcome probabilities) to weak simulation (i.e., the problem of sampling). In contrast to prior work focusing on finite-outcome POVMs, this reduction also applies to  continuous-outcome POVMs. An example is homodyne or heterodyne measurements applied to a superposition of Gaussian states. Here we obtain a sampling algorithm with time complexity~$O(N^3 \chi^3 \|c\|_2^2 \log1/\delta)$ for a state of~$N$ bosonic modes.
\end{abstract}

\maketitle

The superposition principle is a cornerstone of quantum mechanics with profound implications expressed in a plethora of experimental signatures. These include interference patterns observed in the double slit experiment \cite{PhysRev.30.705},  Rabi oscillations appearing in two-level systems in an oscillating field \cite{PhysRev.51.652}, as well as key features of quantum correlations and non-locality such as Bell inequality violations \cite{PhysicsPhysiqueFizika.1.195}. Coherent superpositions of states also figure prominently in all  proposals towards demonstrating an algorithmic quantum advantage for computing:  While the role of ``quantum parallelism'' as a source of quantum computational power is at times inflated in the popular science literature, both  coherent application of function evaluation (e.g., in oracle calls in Grover's algorithm \cite{10.1145/237814.237866} or modular power computation in Shor's algorithm \cite{10.1109/SFCS.1994.365700}), as well as constructive and destructive interference are arguably the most essential building blocks in quantum algorithms design.

Given that coherent superpositions give rise to quantum computational power, it is natural to try to quantify  the algorithmic complexity
associated with coherence. With this objective, we consider a particular algorithmic problem, namely that of sampling from the output distribution when measuring a state. Our main result is an algorithmic recipe which provides an upper bound on the resources required to draw such a sample. 
In more detail, let~$\cH$ be a Hilbert space, $(\Omega,\Sigma)$  a measure space and~$M:\Sigma\rightarrow\cB(\cH)$ a positive operator-valued measure (POVM) on~$\cH$.  According to Born's rule, applying the measurement described by the POVM~$M$ to a state~$\rho\in \cH$ yields a measurement result~$\zeta^M_{\rho}$ which is a sample from the probability distribution~$\mu^M_{\rho}$ defined by
\begin{align}
\Pr\left[\zeta^M_{\rho}\in X\right]&=\mu^M_{\rho}(X):=\tr(M(X)\rho)\label{eq:bornsrule}
\end{align}
for every measurable set~$X\in\Sigma$. 
Our goal is to assess the complexity of drawing a sample~$\zeta^M_\Psi\sim \mu^M_\Psi$ for the case where
\begin{align}
\ket{\Psi}&=\sum_{j=0}^{\chi-1} c_j \ket{\psi_j}\ \label{eq:coherentsuperposition}
\end{align} 
is a superposition of~$\chi$ states~$\{\ket{\psi_j}\}_{j=0}^{\chi-1}$ with complex coefficients~$\{c_j\}_{j=0}^{\chi-1}$. We note that 
the latter two objects are related by  the normalization condition~$1=\|\Psi\|^2=\sum_{j,k=0}^{\chi-1}\overline{c_j}c_k\langle \psi_j,\psi_k\rangle$.

 We are particularly interested in the increase of complexity which needs to be attributed to the fact that we are dealing with a coherent superposition (see Eq.~\eqref{eq:coherentsuperposition}) rather than a mixture of the individual states~$\ket{\psi_j}_{j=0}^{\chi-1}$. For comparison, consider the simpler problem of drawing a sample~$\zeta^M_\rho\sim \mu^M_\rho$  from the outcome distribution~$\mu^M_\rho$ obtained when measuring an incoherent mixture, that is, the ensemble average~$\rho=\sum_{j=0}^{\chi-1}p(j)\proj{\psi_j}$  of the states~$\{\ket{\psi_j}\}_{j=0}^{\chi-1}$, where~$p$ is a probability distribution on the set~$\{0,\ldots,\chi-1\}$.  By linearity of Born's rule~\eqref{eq:bornsrule}, the outcome distribution in this case is simply the convex combination~$\mu^M_\rho=\sum_{j=0}^{\chi-1} p(j)\mu^M_{\psi_j}$ of the individual distributions~$\{\mu^M_{\psi_j}\}_{j=0}^{\chi-1}$ associated  with measuring a single state~$\ket{\psi_j}$.
 This suggests a simple recipe for drawing a sample~$\zeta^M_\rho$: Simply draw an element~$J\sim p$ from the set~$\{0,\ldots,\chi-1\}$ according to the distribution~$p$, and subsequently draw a sample~$\zeta^M_{\Psi_J}$
according to the distribution defined by measuring the state~$\ket{\Psi_J}$. Stated succinctly, this means that a single sample
\begin{align}
(J,X)\quad \textrm{where}\quad J\sim q\quad \textrm{and}\quad X\sim \mu^M_{\Psi_J}
\end{align} is sufficient to produce one sample from~$\mu^M_\rho$.

Drawing a sample from the outcome distribution~$\mu^M_\Psi$  when measuring a superposition~$\ket{\Psi}$ as in~\eqref{eq:coherentsuperposition} is more challenging since this distribution is not simply a convex combination of the individual distributions~$\{\mu^M_{\psi_j}\}_{j=0}^{\chi-1}$. Instead, we have
\begin{align}
\mu^M_{\Psi}(X)&=\sum_{j,k=0}^{\chi-1}\overline{c}_j c_k \langle \psi_j,M(X)\psi_k\rangle
\end{align}
for~$X\in \Sigma$, which means that different terms in the superposition can interfere both constructively and destructively. Nevertheless, we argue (see Theorem~\ref{thm:main} below) that sufficiently many i.i.d.~samples~$(J_1,X_1),\ldots,(J_n,X_n)$, where 
\begin{align}
(J_r,X_r)\quad \textrm{where}\quad J\sim q\quad \textrm{and}\quad X\sim \mu^M_{\Psi_J}\ ,\label{eq:marginaljrxr}
\end{align}
and where~$p$ is the distribution of normalized amplitudes defined by~$p(j)=|c_j|^2/\|c\|_2^2$, are sufficient to produce a sample from~$\mu^M_\Psi$. This is assuming that corresponding probability density functions can be evaluated (e.g., by querying oracles or efficient classical computation). We argue that a number~$n=O(\chi \|c\|_2^2)$ of samples scaling linearly in the number~$\chi$ of terms in the superposition as well as the squared~$2$-norm~$\|c\|_2^2$ of the coefficients is sufficient  to produce a sample~$\zeta^M_\Psi\sim \mu^M_\Psi$ with constant probability.

While our techniques and results can immediately be adapted to finite and countably infinite sets of outcomes, our focus is on the case where the set~$\Omega$ of outcomes is continuous. To treat this case, we will assume here that the POVM~$M$ satisfies the Radon-Nikodym property, i.e., there is  
a~$\sigma$-finite measure~$\lambda$ on~$(\Omega,\Sigma)$ and a~$\lambda$-measurable function~$Q:X\rightarrow\cB(\cH)$ with~$Q(x)\geq 0$ for all~$x\in \Omega$ such that 
\begin{align}
 M(X)=\int_X Q(x)d\lambda(x)\quad\textrm{for all}\quad X\in\Sigma\ .
\end{align}
We refer to Appendix~\ref{sec:radonnikodymPOVM} for more details, including a discussion of necessary and sufficient conditions for this Radon-Nikodym property to hold. It implies that  the probability distribution~$\mu^M_{\Psi}$ of outcomes when measuring a state~$\ket{\Psi}\in\cH$  (see Eq.~\eqref{eq:bornsrule}) is absolutely continuous with respect to~$\lambda$ with Radon-Nikodym derivative
\begin{align}
\frac{d\mu^M_{\Psi}}{d\lambda}(x)&=\langle \Psi,Q(x)\Psi\rangle=:f^M_\Psi(x)\quad\textrm{for}\quad x\in \Omega\ .\label{eq:rnderiv}
\end{align}

Now consider the problem of sampling from~$\mu^M_\Psi$, where~$\ket{\Psi}$ is a superposition as in~\eqref{eq:coherentsuperposition}. We assume that information about the instance~$(M,\ket{\Psi})$ is provided 
in the form of access to  oracles evaluating 
the function~$f^M_\Psi$, each  function~$f^M_{\psi_j}$ with~$j\in \{0, \ldots, \chi-1\}$, the function~$p(j)=|c_j|^2/\|c\|_2^2$, as well as i.i.d.~samples with distribution as in~\eqref{eq:marginaljrxr}.

\begin{theorem}\label{thm:main}
Let~$n\in\mathbb{N}$.  
We give an algorithm which produces a sample~$\zeta^M_{\Psi}\sim \mu^M_\Psi$  from the distribution of measurement outcomes when applying the POVM~$M$ to the state~$\ket{\Psi}$, except with probability at most ~$\exp(-n/(\chi \|c\|_2^2))$ (in which case it outputs ``FAIL'').
The algorithm runs in time~$O(n)$, and uses at most~$n$
\begin{align}
    (A)&\textrm{ oracle calls to }
    \begin{cases}
        \textrm{(A1) the function~$f^M_{\Psi}(\cdot)$}\\
        \textrm{(A2) each function~$\{f^M_{\psi_j}(\cdot)\}_{j=0}^{\chi-1}$}\\ 
        \textrm{(A3) the function $p(\cdot)$, and}
    \end{cases}\\
        (B) & \textrm{ i.i.d.~samples~$\{(J_r,X_r)\}_{r=1}^n$ as in Eq.~\eqref{eq:marginaljrxr}. }\hfill
\end{align}
\end{theorem}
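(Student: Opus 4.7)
The plan is to apply classical rejection sampling against the marginal density of the second coordinate $X_r$ of the supplied samples. Let
\begin{align*}
g(x):=\sum_{j=0}^{\chi-1} p(j)\, f^M_{\psi_j}(x);
\end{align*}
by construction this is the $\lambda$-density of $X_r$ when $(J_r,X_r)$ is drawn according to~\eqref{eq:marginaljrxr}. The crucial ingredient is a pointwise envelope bound
\begin{align*}
f^M_\Psi(x) \leq K\cdot g(x), \qquad K:=\chi\|c\|_2^2,
\end{align*}
which reduces sampling from $\mu^M_\Psi$ to a single rejection step with envelope constant $K$.

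To establish the envelope, I would use the Radon-Nikodym representation $f^M_\Psi(x)=\|Q(x)^{1/2}\Psi\|^2$, substitute $\Psi=\sum_j c_j\psi_j$, and apply Cauchy-Schwarz twice:
\begin{align*}
\Big\|\sum_j c_j\, Q(x)^{1/2}\psi_j\Big\|^2
&\leq \Big(\sum_j |c_j|\,\|Q(x)^{1/2}\psi_j\|\Big)^{\!2} \\
&\leq \chi\sum_j |c_j|^2\, \|Q(x)^{1/2}\psi_j\|^2.
\end{align*}
Since $|c_j|^2=\|c\|_2^2\, p(j)$ and $\|Q(x)^{1/2}\psi_j\|^2=f^M_{\psi_j}(x)$, the right-hand side equals $\chi\|c\|_2^2\, g(x) = K\, g(x)$.

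Given the envelope, the algorithm is straightforward: iterate $r=1,\ldots,n$; form $\alpha_r := f^M_\Psi(X_r)/(K\cdot g(X_r))\in[0,1]$ (this costs one query to $f^M_\Psi$, one query to each $f^M_{\psi_j}$ for $j=0,\ldots,\chi-1$, and the cached values of $p$); flip an independent Bernoulli$(\alpha_r)$ coin; and return $X_r$ at the first acceptance, or output $\textrm{FAIL}$ if none of the $n$ trials accepts. Correctness is the textbook rejection-sampling computation: the accepted $X$ has density proportional to $g(x)\alpha(x)=f^M_\Psi(x)/K$, which renormalizes to $f^M_\Psi$. Each trial accepts independently with probability $\int g\,\alpha\, d\lambda = 1/K$, so the overall failure probability is $(1-1/K)^n\leq \exp(-n/K)=\exp\!\big({-n/(\chi\|c\|_2^2)}\big)$, matching the theorem.

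The main (indeed only) non-routine step is the envelope inequality, where the factor $\chi$ comes from the unweighted Cauchy-Schwarz $(\sum_j a_j)^2\leq \chi\sum_j a_j^2$ and the factor $\|c\|_2^2$ emerges from recognizing $p(j)=|c_j|^2/\|c\|_2^2$ on the right. Everything else is standard probability, and the resource bounds in (A1)-(A3), (B) follow by inspection from the per-iteration cost times $n$ iterations, with the $p$-queries trimmed to at most $\chi\leq n$ by a one-time cache of $p(0),\ldots,p(\chi-1)$.
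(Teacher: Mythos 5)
Your proposal is correct, and the overall architecture is the same as the paper's: rejection sampling of the supplied $X_r$ (whose marginal is the mixture $\overline{\mu}=\sum_j p(j)\mu^M_{\psi_j}$ with density $g$) against the target $\mu^M_\Psi$, with envelope constant $K=\chi\|c\|_2^2$, giving failure probability $(1-1/K)^n\le e^{-n/K}$. Where you genuinely diverge is in the proof of the key envelope bound $f^M_\Psi\le K g$ (the paper's Lemma~\ref{lem:upperbound-dist}). The paper derives it from Hayashi's pinching inequality, first in the pairwise-orthogonal case and then in general via an orthogonalization trick: tensoring each $\ket{\psi_j}$ with a basis vector $\ket{j}$, extending the POVM by a Fourier-basis measurement on the ancilla, and evaluating at the outcome $k=0$. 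You instead write $f^M_\Psi(x)=\|Q(x)^{1/2}\Psi\|^2$ and apply the triangle inequality followed by Cauchy--Schwarz, $\bigl(\sum_j |c_j|\,\|Q(x)^{1/2}\psi_j\|\bigr)^2\le \chi\sum_j|c_j|^2\|Q(x)^{1/2}\psi_j\|^2$, which yields the same bound in two lines with no orthogonality assumption and no ancilla construction; since $Q(x)\ge 0$ is arbitrary, your argument even recovers the operator inequality $\proj{\Psi}\le \chi\|c\|_2^2\sum_j p(j)\proj{\psi_j}$ in full generality. What the paper's route buys is the explicit connection to the pinching inequality of~\cite{MasahitoHayashi_2002} and the observation that the general case reduces to the orthogonal one; what yours buys is brevity and self-containedness. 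Two small points worth making explicit if you write this up: the ratio $f^M_\Psi(X_r)/(Kg(X_r))$ is well defined $\overline{\mu}$-almost surely because the envelope forces $f^M_\Psi=0$ wherever $g=0$; and your caching remark is needed to reconcile the $\chi$ evaluations of $p$ per iteration with the stated oracle budget.
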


In important special cases we describe below, the oracles~
(A1)--(A3) 
can be realized by efficient classical (typically deterministic) algorithms. Similarly,~(B) 
can be realized, e.g., by a pair~$(\mathsf{samp}_p,\mathsf{samp}^M)$ of efficient classical randomized algorithms, where 
\renewcommand{\theenumi}{$\widetilde{B}$\arabic{enumi}} 
\begin{enumerate} 
\item  $\mathsf{samp}_p$ outputs a random sample~$J\sim p$ distributed according to~$p$,  whereas \label{it:samplingqalgorithm}
\item $\mathsf{samp}^M$, on input~$j\in \{0,\ldots,\chi-1\}$ (or equivalently a classical description of~$\psi_j$), produces a sample~$X\sim \mu^M_{\psi_j}$.
\end{enumerate}
Clearly, calling this pair of oracles~$n$ times produces samples as assumed in~(B).

The construction of a suitable algorithm~$\mathsf{samp}^M$ needs to exploit
specific structural features of  the considered family~$\{\psi_j\}_{j=0}^{\chi-1}$ of states and the POVM~$M$. In contrast, a generic algorithm~$\mathsf{samp}_p$ satisfying~\eqref{it:samplingqalgorithm} can be constructed using~$O(\chi \log \chi)$ calls to the oracle~$p(\cdot)$ (and identical runtime): Using a binary representation~$J\equiv (J_{\lceil \log \chi\rceil},\cdots, J_0)$, $J\sim p$ can be sampled from in a bit-wise fashion by using a biased coin flip to draw the~$r$-th bit~$J_r$ according to the conditional probability~$\Pr\left[J_r=1\ |\ J_0=j_0,\ldots,J_{r-1}=j_{r-1}\right]$. The latter probability can be computed using Bayes' rule and~$O(\chi)$ calls to the oracle~$p(\cdot)$. 

As an example, we obtain the following corollary. We denote by 
$\mathrm{vac}(x) = e^{-x^2/2} / \pi^{1/4}$ the single-mode vacuum state and by 
$\ket{\psi_{\Gamma,d}}$ the Gaussian pure state on~$N$ bosonic modes with covariance matrix~$\Gamma$ and displacement vector~$d$, with global phase such that~$\langle \psi_{\Gamma,d} , \mathrm{vac} \rangle^{\otimes N} > 0$. Finally, let `$\mathrm{het}$'~denote heterodyne detection on each mode. 
\begin{corollary}[Heterodyne measurements]\label{cor:gaussiansuperpositionsampling}
 There is a classical probabilistic algorithm which takes as input a tuple~$\{(c_j,\Gamma_j,d_j)\}_{j=0}^{\chi-1}$
describing a superposition~$\ket{\Psi}=\sum_{j=0}^{\chi-1} c_j \ket{\psi_{\Gamma_j,d_j}}$ of Gaussian states on~$N$ bosonic modes, and outputs a sample~$\zeta^\mathrm{het}_\Psi\sim \mu^\mathrm{het}_\Psi$ with probability at least~$1-\delta$. The algorithm has  runtime~$O(N^3\chi^3\|c\|_2^2 \log1/\delta)$. 
\end{corollary}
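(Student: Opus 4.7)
The plan is to invoke Theorem~\ref{thm:main} with~$M=\mathrm{het}$ and to implement each of the oracles~(A1)--(A3) and the sampling primitive~(B) by a classical algorithm tailored to Gaussian states. Since Theorem~\ref{thm:main} produces the desired sample except with probability~$\exp(-n/(\chi\|c\|_2^2))$, the choice
\begin{align}
n=\lceil \chi\|c\|_2^2\log(1/\delta)\rceil
\end{align}
meets the required failure probability~$\delta$. The runtime bound of the corollary then follows once we show that each oracle query and each sample~$(J_r,X_r)$ can be produced in time~$O(\chi^2 N^3)$, since the theorem uses only~$O(n)$ such queries and samples.

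Heterodyne detection on~$N$ modes is the coherent-state POVM with Radon-Nikodym density~$Q(\alpha)=\pi^{-N}|\alpha\rangle\langle\alpha|$ with respect to Lebesgue measure on~$\mathbb{C}^N$, so
\begin{align}
f^{\mathrm{het}}_{\psi_{\Gamma,d}}(\alpha)&=\pi^{-N}|\langle \alpha,\psi_{\Gamma,d}\rangle|^2,\\
f^{\mathrm{het}}_\Psi(\alpha)&=\pi^{-N}\sum_{j,k=0}^{\chi-1}\overline{c_j}c_k\,\langle\psi_{\Gamma_j,d_j},\alpha\rangle\langle \alpha,\psi_{\Gamma_k,d_k}\rangle.
\end{align}
The overlap~$\langle \alpha,\psi_{\Gamma,d}\rangle$ of a coherent state with a Gaussian pure state admits a closed-form Gaussian expression in~$\alpha$ whose evaluation is dominated by the~$O(N^3)$ cost of one matrix inverse and one determinant of the associated~$2N\times 2N$ covariance data. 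The global-phase convention~$\langle \psi_{\Gamma,d},\mathrm{vac}^{\otimes N}\rangle>0$ in the statement makes the overlap (and hence every cross term) unambiguously computable. Hence the oracles~(A2) and~(A1) are realized in time~$O(N^3)$ and~$O(\chi^2 N^3)$ per query, respectively, while~(A3) is realized in time~$O(1)$ per query after an~$O(\chi)$ precomputation of~$\|c\|_2^2$.

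It remains to implement the sampling primitive~(B). The index~$J\sim p$ is drawn in time~$O(\chi\log\chi)$ using the generic bitwise sampler described in the main text. Given~$J=j$, the outcome distribution~$\mu^{\mathrm{het}}_{\psi_{\Gamma_j,d_j}}$ is a (real) multivariate Gaussian on~$\mathbb{R}^{2N}$ whose mean and covariance are explicit affine functions of~$d_j$ and~$\Gamma_j$; a sample is produced by one Cholesky factorization and an affine transformation of~$2N$ i.i.d.\ standard normal variates, in total time~$O(N^3)$. Combining all costs, the per-query budget of~$O(\chi^2 N^3)$ is met, and multiplying by~$n$ yields the stated runtime~$O(N^3\chi^3\|c\|_2^2\log(1/\delta))$.

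The only non-trivial step in executing this plan is to derive explicit, phase-consistent closed-form formulas for the overlap~$\langle\alpha,\psi_{\Gamma,d}\rangle$ and for the mean and covariance of~$\mu^{\mathrm{het}}_{\psi_{\Gamma,d}}$ in terms of the symplectic data~$(\Gamma,d)$, and to verify that all involved linear-algebra operations fit into the advertised~$O(N^3)$ per-query budget. This is standard Gaussian bookkeeping rather than a conceptual obstacle, but some care is required to handle the phases arising from the normalization convention~$\langle \psi_{\Gamma,d},\mathrm{vac}^{\otimes N}\rangle>0$ consistently so that cross terms in~$f^{\mathrm{het}}_\Psi$ are evaluated correctly.
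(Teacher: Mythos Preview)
Your proposal is correct and follows essentially the same route as the paper's proof: invoke Theorem~\ref{thm:main} with $n=\chi\|c\|_2^2\log(1/\delta)$, and verify that each of the oracles~(A1)--(A3) and the sampler~(B) can be realized classically within the stated per-call budget (dominated by the $O(\chi^2 N^3)$ cost of evaluating $f^{\mathrm{het}}_\Psi$). The paper delegates the evaluation of $f^{\mathrm{het}}_\Psi$ to~\cite[Lemma~5.1]{PhysRevA.110.042402} and the description of $\mu^{\mathrm{het}}_{\psi_{\Gamma_j,d_j}}$ (a Gaussian with mean~$d_j$ and covariance~$(\Gamma_j+I)/2$) to Appendix~\ref{sec:gauss_measurement}, whereas you sketch both computations directly via coherent-state overlaps and a Cholesky step; this is the same content presented at a slightly different level of detail.
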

\begin{proof}
Given the input~$\{(c_j,\Gamma_j,d_j)\}_{j=0}^{\chi-1}$ the algorithm in \cite[Lemma 5.1]{PhysRevA.110.042402} evaluates the function~$f_\Psi^\mathrm{het}$ in time~$O(\chi^2 N^3)$. 
The function~$p(j)=|c_j|^2/\|c\|_2^2$ can be evaluated in time~$O(1)$ for any input~$j\in \{0,\ldots,\chi-1\}$ after precomputing~$\|c\|^2_2$, which can be done in time~$O(\chi)$. Sampling from~$p$ can be achieved with~$O(\chi \log \chi)$  calls to~$p(\cdot)$ using the generic algorithm~$\mathsf{samp}_p$ described in the main text.  For each
$j\in\{0,\ldots, \chi-1\}$, the distribution~$\mu_{\psi_{\Gamma_j,d_j}}^\mathrm{het}$
is  Gaussian with mean vector~$d_j$ and covariance matrix~$(\Gamma_j + I)/2$ (see Appendix~\ref{sec:gauss_measurement}). Thus evaluating
$f_{\psi_{\Gamma_j,d_j}}^\mathrm{het}$ and sampling~$x\sim\mu_{\psi_{\Gamma_j,d_j}}^\mathrm{het}$ can be achieved in time~$O(N^3)$. With these building blocks, Theorem~\ref{thm:main} gives an algorithm to produce a sample~$\zeta_\Psi^\mathrm{het}$ in time~$O(n\chi^2 N^3)$ except with probability at most~$\exp(-n/(\chi \|c\|_2^2))$. Taking~$n=\chi \|c\|_2^2 \log1/\delta$ gives the claim.
\end{proof}

\begin{figure}[!b]
    \begin{subfigure}[b]{0.45\textwidth}
        \includegraphics[height=4.5cm]{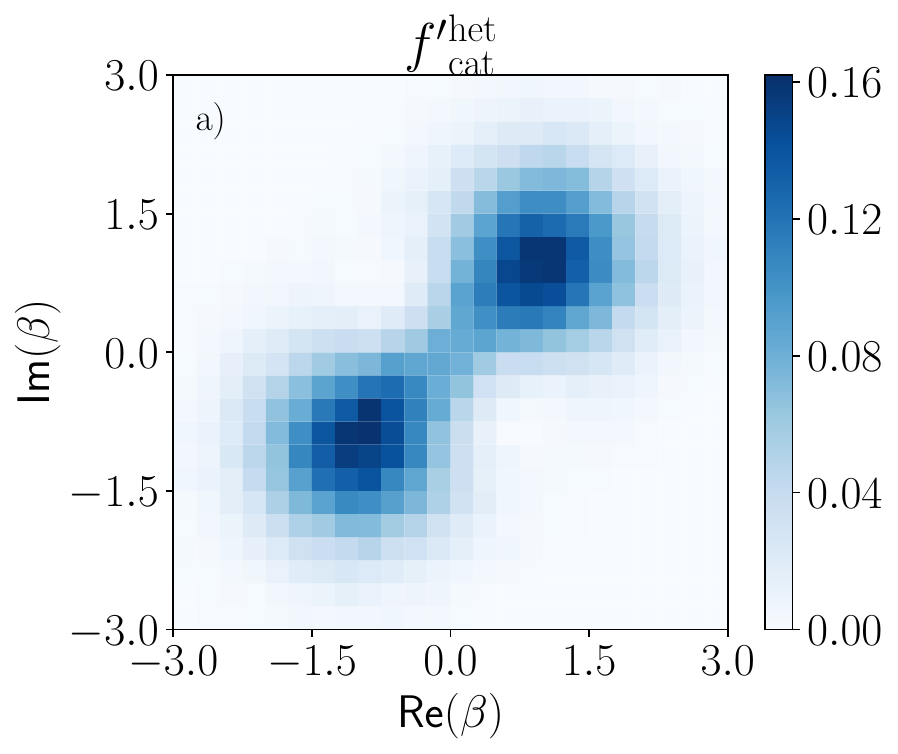}
        \caption{Normalized sample count~$f'{}^\mathrm{het}_\cat(\beta)$ (histogram) produced with~$10^5$ samples drawn from~$\mu_\cat^\mathrm{het}$ using a variant of the algorithm in \cref{cor:gaussiansuperpositionsampling} that repeats until success (i.e., acceptance). The histogram has a bin width of $0.25$. The empirically observed average number of trials until success is $1.97$. This matches the expected number of iterations~$\chi  \|c\|_2^2 \approx 1.96$ as predicted by \cref{eq:avg_number_samples} together with~\cref{lem:upperbound-dist}.
        \justifying}
    \end{subfigure}
    \begin{subfigure}[b]{0.45\textwidth}
        \includegraphics[height=4.5cm]{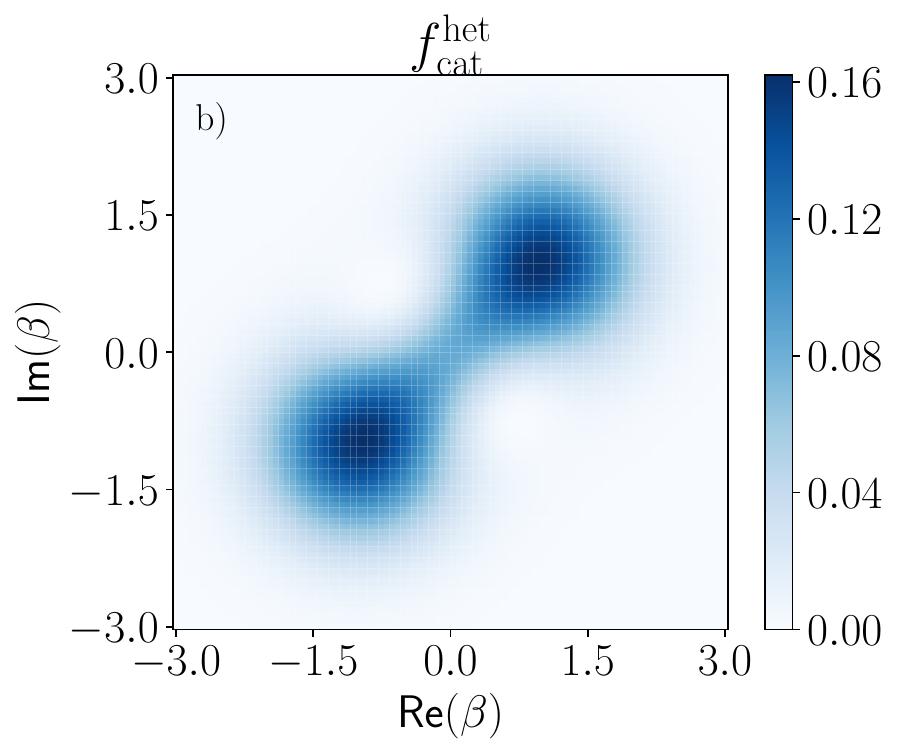}
        \caption{Probability density function~$f^\mathrm{het}_\cat(\beta)$. \justifying}
    \end{subfigure}
    \begin{subfigure}[b]{0.45\textwidth}
        \hspace{1.5mm}\includegraphics[height=4.5cm]{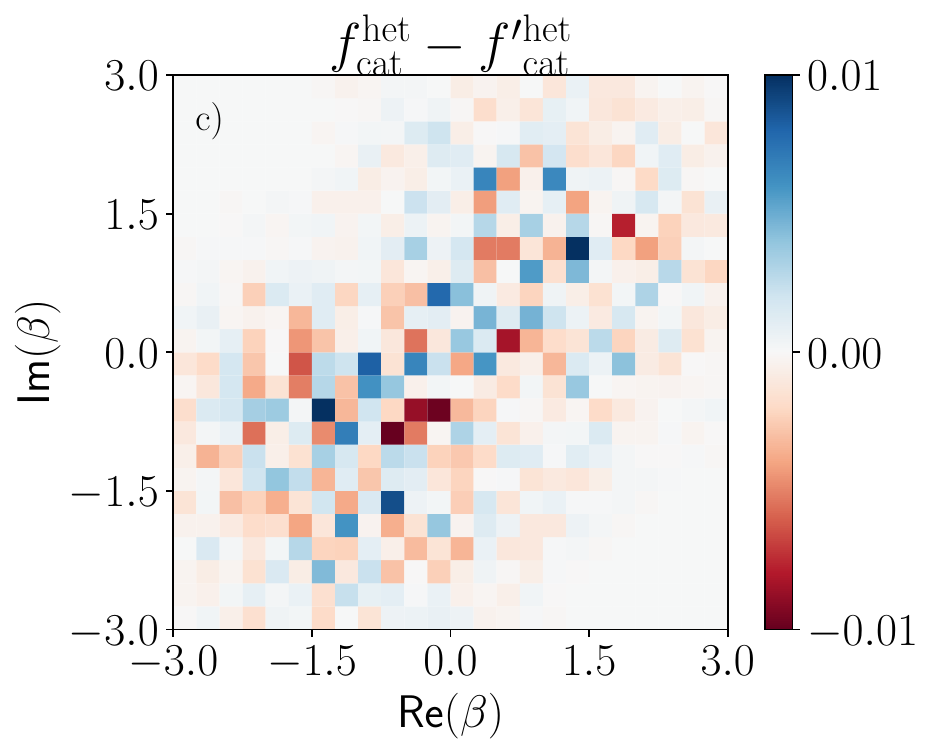}
        \caption{Difference between~$f^\mathrm{het}_\cat(\beta)$ and~$f'{}^\mathrm{het}_\cat(\beta)$. \justifying}
    \end{subfigure}
    \caption{A heterodyne measurement applied to the cat-state~$\ket{\cat}\propto \ket{\alpha} + \ket{-\alpha}$, where~$\ket{\alpha}$ denotes the coherent state with~$\alpha = 1 + i$, has outcome distribution~$\mu^\mathrm{het}_\cat$ with probability density function~$f^\mathrm{het}_\cat(\beta) = |\braket{\beta,\cat}|^2 / \pi$.  \justifying}
    \label{fig:cat}
\end{figure}

An application of  Corollary~\ref{cor:gaussiansuperpositionsampling} in given Fig.~\ref{fig:cat}. Theorem~\ref{thm:main} also provides a way of simulating homodyne measurements in a similar manner, see  Fig.~\ref{fig:gkp} for an example. As we discuss in Appendix~\ref{sec:rank-extent},
 the quantity~$\chi^3\|c\|_2^2$ in Theorem~\ref{thm:main} can be replaced by a function of the Gaussian extent~$\xi(\Psi)$, a measure of the degree of non-Gaussianity of~$\ket{\Psi}$ used in the context of classical simulation of non-Gaussian states and dynamics. We note that in the latter context, Bourassa et al.~\cite{PRXQuantum.2.040315} have proposed an alternative rejection-sampling method
for the outcome distribution of a linear optics measurement applied to 
a state whose Wigner function is a linear combination of Gaussian functions. This includes the case of a superposition of Gaussian states as considered in Corollary~\ref{cor:gaussiansuperpositionsampling}. In our opinion, the runtime guarantee of the latter (cf.~also Appendix~\ref{sec:rank-extent}) is  more easily applicable than the bound given in~\cite{PRXQuantum.2.040315}. More importantly, the reasoning underlying~\cite{PRXQuantum.2.040315} heavily depends on the particular (in this case Gaussian) functions considered, whereas Theorem~\ref{thm:main} applies more generally in black-box (oracular) settings, both in terms of function evaluation and samples from individual distributions.

\begin{figure}[!t]
    \includegraphics[width=0.8\linewidth]{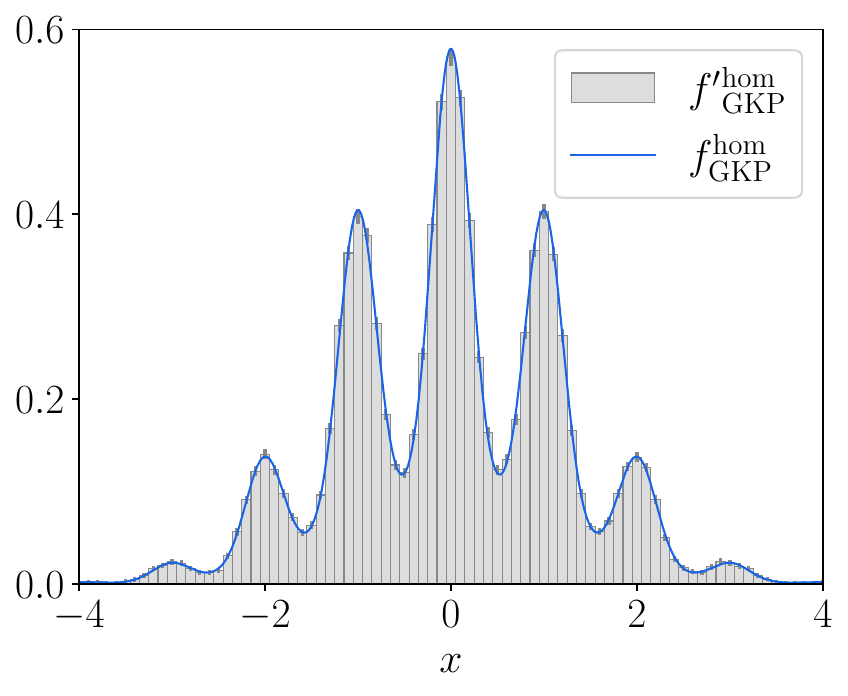}
    \caption{A homodyne measurement applied to the GKP-state~$
    \ket{\GKP} \propto \sum\nolimits_{z\in\mathbb{Z}} \exp(-\kappa^2 z^2/2) \ket{\psi(z,\Delta)}$ with~$k=0.6$ and~$\Delta=0.3$, where~$\ket{\psi_{z,\Delta}}$ is a displaced squeezed vacuum state with wavefunction~$\psi_{z,\Delta}(x) = \exp(-(x-z)^2/(2\Delta^2)) / (\pi \Delta^2)^{1/4}$, has outcome distribution~$\mu^\mathrm{hom}_\GKP$ with probability density function~$f^\mathrm{hom}_\GKP(x)$ (shown as a solid blue line). The histogram, which has a bin width of $0.1$, is the normalized sample count~$f'{}^\mathrm{hom}_\GKP(x)$ produced with~$10^5$ samples drawn from~$\mu^\mathrm{hom}_\GKP$ using a variant of the algorithm in \cref{thm:main} that repeats until success (i.e., acceptance). The empirically observed average number of trials until success is $13.46$. This matches the expected number of iterations~$\chi \|c\|_2^2 \approx 13.47$ as predicted by \cref{eq:avg_number_samples} together with~\cref{lem:upperbound-dist}.
    \justifying
    \label{fig:gkp}}
\end{figure}

The proof of Theorem~\ref{thm:main} is an  application of rejection-sampling. 
Recall that to sample from a distribution~$\nu$
 given the ability to sample from a distribution~$\mu$, this assumes that we have a uniform upper bound~$K \geq\left\|\frac{d \nu}{d \mu}\right\|_{\infty}$ on the Radon-Nikodym derivative. Given~$X_1, \ldots, X_n \sim \mu$, rejection-sampling accepts~$X_r$ with probability~$\frac{1}{K} \cdot \frac{d \nu}{d \mu}\left(X_r\right)$ 
 for each~$r\in [n]$,  and returns an arbitrary accepted~$X_r$ as a sample from~$\nu$. It can be shown that any accepted sample has distribution~$\nu$, and a single sample is accepted with probability at least~$1/K$ independently of the other samples. In particular, $n=\Theta(K)$ samples are sufficient to produce an accepted sample with high probability, see Appendix~\ref{sec:rejectionsamplingapproach}  for  details.

 In our case, the target distribution is~$\nu:=\mu^{M}_{\Psi}$. We set~$X_r:=X^M_{J_r}$. By definition, each sample~$X_r$ has distribution~$\overline{\mu}:=\sum_{j=0}^{\chi-1} p(j)\mu^M_{\psi_j}$.
 By assumption, for any~$X_r=x\in \Omega$ and~$J_r=r$, the  Radon-Nikodym derivative
 \begin{align}
 \frac{d\overline{\mu}}{d\lambda}(x)&=\sum_{j=0}^{\chi-1} p(j) \frac{d\mu^M_{\psi_j}}{d\lambda}(x)=\sum_{j=0}^{\chi-1} p(j) f^M_{\Psi_j}(x)\label{eq:radonnlmubar}
 \end{align}
 can be evaluated by~$\chi$ calls to the oracle~$p(\cdot)$ and one call to each oracle~$f^M_{\Psi_j}(\cdot)$, $j\in \{0, \ldots, \chi-1\}$. 
 Similarly,~$\frac{d\nu}{d\lambda}(x)=f^M_{\Psi}(\cdot)$ can be evaluated with one call to~$f^{M}_{\Psi}(\cdot)$. It follows that (given~$K$), the acceptance probability
 \begin{align}
 \frac{1}{K}\cdot \frac{d\nu}{d\overline{\mu}}(x)=\frac{1}{K} \frac{d\nu}{d\lambda}(x)\cdot
\left(\frac{d\overline{\mu}}{d\lambda}(x)\right)^{-1}
  \end{align}
  can be determined from these oracle calls. The claim therefore follows from 
  the characterization of rejection-sampling (see Lemma~\ref{lem:blockpolyanskiy} in Appendix~\ref{sec:rejectionsamplingapproach}) and the following lemma giving a uniform upper bound on the Radon-Nikodym derivative.

\begin{lemma}
\label{lem:upperbound-dist}
Let~$\overline{\mu}:=\sum_{j=0}^{\chi-1} p(j)\mu^M_{\psi_j}$. 
Then
\begin{align}
\label{eq:ineqaux}
\frac{d\mu^M_{\Psi}}{d\overline{\mu}}(x)&\leq \chi \|c\|_2^2\qquad\textrm{ for any }x\in \Omega\ .
\end{align}
\end{lemma}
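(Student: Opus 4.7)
The plan is to reduce the claim to a pointwise inequality between $\lambda$-densities. By~\eqref{eq:rnderiv} and linearity, $\mu^M_\Psi$ and $\overline{\mu}$ have $\lambda$-densities $f^M_\Psi(x)$ and $\sum_{j=0}^{\chi-1} p(j) f^M_{\psi_j}(x)$, respectively, so I will establish the pointwise bound
\begin{align}
f^M_\Psi(x) \leq \chi\|c\|_2^2\cdot\sum_{j=0}^{\chi-1} p(j)\, f^M_{\psi_j}(x)\qquad\text{for all } x\in\Omega\ .
\end{align}
This implies in particular that $\mu^M_\Psi \ll \overline{\mu}$ (the numerator vanishes $\lambda$-a.e.\ on the set where the denominator vanishes), and therefore $d\mu^M_\Psi/d\overline{\mu}$ exists and equals the ratio of $\lambda$-densities $\overline{\mu}$-almost everywhere, whereupon~\eqref{eq:ineqaux} is immediate.

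The substantive step is to exploit positivity $Q(x)\geq 0$ of the POVM density by passing to its (positive) operator square root. I set $\ket{v_j}:=Q(x)^{1/2}\ket{\psi_j}\in\cH$, which rewrites the densities as $f^M_{\psi_j}(x) = \|v_j\|^2$ and $f^M_\Psi(x) = \bigl\|\sum_j c_j v_j\bigr\|^2$, turning the quadratic form $\langle \Psi,Q(x)\Psi\rangle$ into a squared Hilbert-space norm. The pointwise inequality then reduces to the elementary estimate $\bigl\|\sum_j c_j v_j\bigr\|^2 \leq \chi \sum_j |c_j|^2\|v_j\|^2$, obtained by composing the triangle inequality with the Cauchy-Schwarz bound $\bigl(\sum_j |c_j|\|v_j\|\bigr)^2\leq \chi\sum_j|c_j|^2\|v_j\|^2$ on $\mathbb{R}^\chi$; the factor $\chi$ here is simply the squared $\ell^2$-norm of the all-ones vector. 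Substituting $|c_j|^2 = \|c\|_2^2\,p(j)$ recovers the stated bound.

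I do not expect any serious obstacle: the argument is essentially operator-square-root factoring plus one application of Cauchy-Schwarz, and the factor $\chi$ appears in a transparent way (so the bound should be essentially tight, saturated when the vectors $c_j v_j$ are all aligned). The only mildly delicate point is the measure-theoretic one of transferring the pointwise $\lambda$-density inequality to an $\overline{\mu}$-essential statement about $d\mu^M_\Psi/d\overline{\mu}$, but this is routine once $\mu^M_\Psi\ll\overline{\mu}$ has been verified.
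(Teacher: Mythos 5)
Your proof is correct, and it takes a genuinely different route from the paper. The paper first proves the bound for pairwise orthogonal $\{\ket{\psi_j}\}$ via the pinching operator inequality $\proj{\Psi}\leq \chi\sum_j P_j\proj{\Psi}P_j$ (Claim~\ref{lem:upperboundm}), and then handles the general case by an orthogonalization trick: tensoring each $\ket{\psi_j}$ with a register state $\ket{j}$, extending the POVM with Fourier-basis projectors $\proj{\widehat{k}}$, and evaluating at $k=0$ to recover the original densities up to the factor $1/\chi$. Your argument bypasses all of this: factoring $Q(x)=Q(x)^{1/2}Q(x)^{1/2}$ reduces the claim to the scalar inequality $\bigl\|\sum_j c_j v_j\bigr\|^2\leq\chi\sum_j|c_j|^2\|v_j\|^2$, which is just the triangle inequality followed by Cauchy--Schwarz, and it yields \cref{eq:pdensityfncts} in the general (non-orthogonal) case in one step with the same constant. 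What the paper's route buys is a reusable \emph{operator} inequality and a connection to the pinching inequality of Hayashi; what yours buys is brevity and self-containedness (in fact your computation shows directly that $\proj{\Psi}\leq\chi\|c\|_2^2\sum_j p(j)\proj{\psi_j}$ holds as an operator inequality even without orthogonality, since it suffices to test against arbitrary vectors). Your handling of the measure-theoretic passage from the pointwise $\lambda$-density bound to the $\overline{\mu}$-essential bound on $d\mu^M_\Psi/d\overline{\mu}$ is also sound, and indeed more careful than the paper, which treats that step as immediate via \cref{eq:radonnlmubar}.
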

\noindent In the special case where the states~$\{\ket{\psi_j}\}_{j=0}^{\chi-1}$ are pairwise orthogonal, ~\cref{eq:ineqaux} is a direct consequence of the following so-called pinching (operator) inequality. 
\begin{claim}[\cite{MasahitoHayashi_2002}, see Appendix~\ref{sec:appendixpinching} for a proof]\label{lem:upperboundm}
Let~$|\Psi\rangle=\sum_{j=0}^{\chi-1} c_j\left|\psi_j\right\rangle$ be a state with~$\langle \psi_j, \psi_k \rangle = 0$ for all~$j,k\in\{0,\ldots, \chi-1\}$.
Define~$P_j=\proj{\psi_j}$. Then 
\begin{align}
    |\Psi\rangle\langle\Psi| &\leq \chi \sum_{i=0}^{\chi-1} P_j|\Psi\rangle\langle\Psi| P_j=\chi \sum_{j=0}^{\chi-1}p(j)\cdot \ket{\psi_j}\bra{\psi_j}\ .
\end{align}
\end{claim}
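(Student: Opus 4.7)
The plan is to deduce the inequality from the standard Fourier (phase-averaging) proof of the pinching inequality, assuming that $\{\ket{\psi_j}\}_{j=0}^{\chi-1}$ are pairwise orthonormal (so that the hypothesis $\langle \psi_j,\psi_k\rangle=0$ should be read as holding for $j\neq k$). First I reduce to the finite-dimensional subspace $V=\operatorname{span}\{\ket{\psi_0},\ldots,\ket{\psi_{\chi-1}}\}$: the range of $|\Psi\rangle\langle\Psi|$ lies inside $V$, and both sides of the claimed inequality annihilate $V^\perp$, so it suffices to establish
\begin{align}
\rho \leq \chi \sum_{j=0}^{\chi-1} P_j \rho P_j
\end{align}
as operators on $V$, where $\rho:=|\Psi\rangle\langle\Psi|$ and $\{P_j\}_{j=0}^{\chi-1}$ are mutually orthogonal rank-one projectors satisfying $\sum_{j=0}^{\chi-1} P_j = I_V$.

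Next, I introduce the Fourier family of operators on $V$, namely $U_\ell := \sum_{j=0}^{\chi-1} \omega^{j\ell} P_j$ with $\omega:=e^{2\pi i/\chi}$ and $\ell\in\{0,\ldots,\chi-1\}$. Using $P_j^*=P_j$ and $P_j P_k = \delta_{j,k}P_j$ one immediately verifies $U_\ell U_\ell^* = \sum_j P_j = I_V$, so each $U_\ell$ is unitary on $V$. The discrete Fourier orthogonality $\sum_{\ell=0}^{\chi-1}\omega^{\ell(j-k)}=\chi\,\delta_{j,k}$ then yields
\begin{align}
\sum_{\ell=0}^{\chi-1} U_\ell \rho U_\ell^* &= \sum_{j,k=0}^{\chi-1}\biggl(\sum_{\ell=0}^{\chi-1}\omega^{\ell(j-k)}\biggr) P_j\rho P_k = \chi\sum_{j=0}^{\chi-1} P_j\rho P_j\ .
\end{align}

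The final step is the observation that $U_0=\sum_j P_j = I_V$, so the $\ell=0$ contribution equals $\rho$ itself, while every other term $U_\ell \rho U_\ell^*$ is positive semidefinite since $\rho\geq 0$. Consequently
\begin{align}
\rho \leq \sum_{\ell=0}^{\chi-1} U_\ell\rho U_\ell^* = \chi\sum_{j=0}^{\chi-1} P_j\rho P_j\ ,
\end{align}
which is the operator inequality. The stated identity $\sum_j P_j\rho P_j = \sum_j p(j)\,\ket{\psi_j}\bra{\psi_j}$ then follows by expanding $\ket{\Psi}=\sum_j c_j\ket{\psi_j}$, noting $P_j\ket{\Psi}=c_j\ket{\psi_j}$, and recalling $p(j)=|c_j|^2/\|c\|_2^2$ together with $\|c\|_2^2=1$ (which holds under orthonormality). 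I do not anticipate a substantive obstacle; the only nontrivial point is identifying the phase-averaging family that simultaneously washes out the cross terms $P_j\rho P_k$ with $j\neq k$ and preserves $\rho$ as one of its summands.
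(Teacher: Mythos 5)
Your proof is correct and follows essentially the same route as the paper's: both use the Fourier phase-averaging family $\sum_j \omega^{j\ell}P_j$, the orthogonality relation $\sum_\ell \omega^{\ell(j-k)}=\chi\delta_{j,k}$, and positivity of the discarded terms. The only cosmetic difference is that you restrict to the span of the $\ket{\psi_j}$ so that the $\ell=0$ operator is the identity, whereas the paper keeps $V_0=\sum_k P_k$ as a projection on the full space and verifies $V_0\proj{\Psi}V_0^\dagger=\proj{\Psi}$ by a short direct computation.
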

\noindent Indeed, pairwise orthogonality of~$\{\ket{\psi_j}\}_{j=0}^{\chi-1}$ 
implies that~$\|c\|_2=1$ and~$p(j)=|\langle \psi_j,\Psi\rangle|^2$, thus~\cref{lem:upperboundm} and~$Q(x)\geq 0$ mean that  
\begin{align}
    f^M_\Psi(x)\leq \chi\|c\|_2^2 \sum_{j=0}^{\chi-1}p(j)f^M_{\psi_j}(x)\ \textrm{ for all }x\in \Omega\ .\label{eq:pdensityfncts}
\end{align}
Because of Eq.~\eqref{eq:radonnlmubar}, \cref{eq:pdensityfncts}
implies Lemma~\ref{lem:upperbound-dist}.

Next, we show \cref{eq:pdensityfncts} in the general case where the states~$\{\ket{\psi_j}\}_{j=0}^{\chi-1}$ are not pairwise orthogonal. The main idea is to orthogonalize the states~$\ket{\psi_j}$ and to apply \cref{lem:upperboundm}.
Let~$\{\ket{j}\}_{j=0}^{\chi-1}$ be an orthonormal basis of~$\mathbb{C}^\chi$ and let
$\{\ket{\widehat{k}}\}_{k=0}^{\chi-1}$ be the associated Fourier basis, where 
$\ket{\widehat{k}}=\frac{1}{\sqrt{\chi}}\sum_{j=0}^{\chi-1}e^{2\pi i jk/\chi}\ket{j}$.  Given~$|\Psi\rangle=\sum_{j=0}^{\chi-1} c_j\left|\psi_j\right\rangle\in \cH$, consider the state
\begin{align}
|\Psi'\rangle=\frac{1}{\|c\|_2}\sum_{j=0}^{\chi-1} c_j\ket{\psi_j'}
\end{align} 
where~$\ket{\psi'_j}= \left|\psi_j\right\rangle\otimes\ket{j}\in\cH\otimes\mathbb{C}^\chi$. Observe that the states~$\{\ket{\psi_j'}\}_j$ are pairwise orthogonal and thus 
\begin{align}
    \label{eq:qjprime}
    p'(j) &= |\langle \psi'_j,\Psi'\rangle|^2=\frac{|c_j|^2}{\|c\|_2^2} = p(j)\ .
\end{align}
For~$X\in \Sigma$ and~$Y\subseteq \{0,\ldots,\chi-1\}$, the expression
\begin{align}
    M'(X\times Y)&=\sum_{k\in Y }M(X)\otimes \proj{\widehat{k}}
\end{align}
defines a POVM~$M':\Sigma\times 2^{\{0,\ldots,\chi-1\}}\rightarrow\cB(\cH\otimes \mathbb{C}^\chi)$ which satisfies the Radon-Nikodym property with~$Q'(x,k)=Q(x)\otimes \proj{\widehat{k}}$ and the product measure~$\lambda'(X\times Y)=\lambda(X)\cdot |Y|$. Writing~$f^{M'}_{\Psi}(x,k)=\langle \Psi,Q'(x,k)\Psi\rangle$ and using~$|\langle\widehat{k},j\rangle|^2=1/\chi$ we have  for~$j,k\in \{0,\ldots,\chi-1\}$ and~$x\in\Omega$ 
\begin{align}
\frac{1}{\chi}f^M_{\psi_j}(x)&=  |\langle \widehat{k}, j\rangle|^2\cdot   \langle \psi_j,Q(x)\psi_j\rangle=f^{M'}_{\psi'_j}(x,k)\ .     \label{eq:pjprime}  
\end{align}
Similarly using~$\langle \widehat{0},j\rangle=\frac{1}{\sqrt{\chi}}$ we obtain 
\begin{align}
    \frac{f_{\Psi}^M(x)}{\chi \|c\|_2^2} 
    &= f_{\Psi'}^{M'}(x,0) \\
    &\leq \chi \sum_{j=0}^{\chi-1} p'(j) f_{\psi'_j}^{M'}(x,0) 
    && \textrm{by~\cref{lem:upperboundm}} \\
    &= \sum_{j=0}^{\chi-1} p(j) f_{\psi_j}^M(x) 
    && \textrm{by   
    Eqs.~\eqref{eq:qjprime},~\eqref{eq:pjprime}} \ 
    \end{align}
    concluding the proof of~\cref{eq:pdensityfncts} and hence Lemma~\ref{lem:upperbound-dist}.

Having established Theorem~\ref{thm:main}, let us briefly comment on the broader context where it may find further applications.  Depending on the measurement and the states considered, it may be computationally easy to evaluate the associated probability density functions, yet unclear how to (efficiently) sample from the corresponding distributions. In such a scenario, a quantum device allowing to prepare and measure individual states~$\{\ket{\psi_j}\}_{j=0}^{\chi-1}$ could be used to produce i.i.d.~samples as required by Theorem~\ref{thm:main}. Applying our algorithm, the quantum device could thus be used to sample from~$\mu^M_\Psi$ for a superposition~$\ket{\Psi}=\sum_{j=0}^{\chi-1}c_j\ket{\psi_j}$ even though it may not be possible to easily prepare~$\ket{\Psi}$. This hybrid use of Theorem~\ref{thm:main} -- where a quantum device is combined with efficient classical processing -- is made possible by the black-box formulation of our result: It makes no references to how the i.i.d.~samples are produced.

We note that  the ability to evaluate probabilities associated with measuring (often referred to as strong simulation) can often be leveraged to produce samples from the corresponding distribution (i.e., weak simulation). In particular, this is the case for finite-outcome measurements. A corresponding reduction from strong to weak simulation can be given along the lines sketched in the description of the generic routine~$\mathsf{samp}_p$: it suffices to compute conditional probabilities of individual bits, which can then be sampled iteratively. 
In the context of multiqubit systems and computational basis measurements,
such reductions from strong to weak simulation, as well as approximate versions thereof were previously considered e.g., in Ref.~\cite{Pashayan2020fromestimationof}. They are also implicit in various classical simulation algorithms for quantum dynamics (see e.g.,~\cite{Bravyi:2012aa,PhysRevA.65.032325}). Other approaches (applicable to output states of circuits) avoid computing marginal probabilities, instead requiring amplitude of mid-circuit states~\cite{PhysRevLett.128.220503}. More recently, a new reduction from strong to weak simulation was given in~\cite{bravyi2025classicalquantumalgorithmscharacters}.

Theorem~\ref{thm:main} can also be interpreted
as a new reduction from strong to weak simulation. For example, 
Corollary~\ref{cor:gaussiansuperpositionsampling} extends the simulation algorithm of~\cite{PhysRevA.110.042402} (see also \cite{hahn2024classicalsimulationquantumresource}) from strong to weak simulation. The key difference to the existing reductions mentioned above is that the latter only apply to finite-outcome POVMs. In contrast, Theorem~\ref{thm:main} also covers cases with a continuous number of outcomes. In particular, it circumvents more naive approaches based on discretizing (binning), which generally lead to probabilities that are less tractable and/or cannot be computed efficiently.

\textit{Acknowledgments.}
B.D. and R.K. gratefully acknowledge support by the European Research Council under Grant No. 101001976 (project EQUIPTNT).

\bibliography{references}

\begin{thebibliography}{10}

\bibitem{PhysRev.30.705}
C.~Davisson and L.~H. Germer.
\newblock Diffraction of electrons by a crystal of nickel.
\newblock {\em Phys. Rev.}, 30:705--740, Dec 1927.

\bibitem{PhysRev.51.652}
I.~I. Rabi.
\newblock Space quantization in a gyrating magnetic field.
\newblock {\em Phys. Rev.}, 51:652--654, Apr 1937.

\bibitem{PhysicsPhysiqueFizika.1.195}
J.~S. Bell.
\newblock On the {E}instein {P}odolsky {R}osen paradox.
\newblock {\em Physics Physique Fizika}, 1:195--200, Nov 1964.

\bibitem{10.1145/237814.237866}
Lov~K. Grover.
\newblock A fast quantum mechanical algorithm for database search.
\newblock In {\em Proceedings of the Twenty-Eighth Annual ACM Symposium on
  Theory of Computing}, STOC '96, page 212–219, New York, NY, USA, 1996.
  Association for Computing Machinery.

\bibitem{10.1109/SFCS.1994.365700}
P.~W. Shor.
\newblock Algorithms for quantum computation: discrete logarithms and
  factoring.
\newblock In {\em Proceedings of the 35th Annual Symposium on Foundations of
  Computer Science}, SFCS '94, page 124–134, USA, 1994. IEEE Computer
  Society.

\bibitem{PhysRevA.110.042402}
Beatriz Dias and Robert K\"onig.
\newblock Classical simulation of non-{G}aussian bosonic circuits.
\newblock {\em Phys. Rev. A}, 110:042402, Oct 2024.

\bibitem{PRXQuantum.2.040315}
J.~Eli Bourassa, Nicol\'as Quesada, Ilan Tzitrin, Antal Sz\'ava, Theodor
  Isacsson, Josh Izaac, Krishna~Kumar Sabapathy, Guillaume Dauphinais, and Ish
  Dhand.
\newblock Fast simulation of bosonic qubits via {G}aussian functions in phase
  space.
\newblock {\em PRX Quantum}, 2:040315, Oct 2021.

\bibitem{MasahitoHayashi_2002}
Masahito Hayashi.
\newblock Optimal sequence of quantum measurements in the sense of {S}tein's
  lemma in quantum hypothesis testing.
\newblock {\em Journal of Physics A: Mathematical and General}, 35(50):10759,
  Dec 2002.

\bibitem{Pashayan2020fromestimationof}
Hakop Pashayan, Stephen~D. Bartlett, and David Gross.
\newblock From estimation of quantum probabilities to simulation of quantum
  circuits.
\newblock {\em {Quantum}}, 4:223, Jan 2020.

\bibitem{Bravyi:2012aa}
Sergey Bravyi and Robert K{\"o}nig.
\newblock Disorder-assisted error correction in {M}ajorana chains.
\newblock {\em Communications in Mathematical Physics}, 316(3):641--692, 2012.

\bibitem{PhysRevA.65.032325}
Barbara~M. Terhal and David~P. DiVincenzo.
\newblock Classical simulation of noninteracting-fermion quantum circuits.
\newblock {\em Phys. Rev. A}, 65:032325, Mar 2002.

\bibitem{PhysRevLett.128.220503}
Sergey Bravyi, David Gosset, and Yinchen Liu.
\newblock How to simulate quantum measurement without computing marginals.
\newblock {\em Phys. Rev. Lett.}, 128:220503, Jun 2022.

\bibitem{bravyi2025classicalquantumalgorithmscharacters}
Sergey Bravyi, David Gosset, Vojtech Havlicek, and Louis Schatzki.
\newblock Classical and quantum algorithms for characters of the symmetric
  group.
\newblock arXiv:2501.12579, 2025.

\bibitem{hahn2024classicalsimulationquantumresource}
Oliver Hahn, Ryuji Takagi, Giulia Ferrini, and Hayata Yamasaki.
\newblock Classical simulation and quantum resource theory of non-{G}aussian
  optics.
\newblock arXiv:2404.07115, 2024.

\bibitem{robinson2014operator}
Benjamin Robinson.
\newblock {\em Operator-valued frames associated with measure spaces}.
\newblock Ph{D} thesis, Arizona State University, 2014.

\bibitem{berezansky1995spectral}
Y.~M. Berezansky and Y.~G. Kondratiev.
\newblock {\em Spectral methods in infinite-dimensional analysis}.
\newblock Mathematical Physics and Applied Mathematics. Springer Dordrecht,
  Kiev, Ukraine, 1 edition, 1995.
\newblock Originally published in Russian.

\bibitem{Chiribellaetal10}
Giulio Chiribella, Giacomo~Mauro D'Ariano, and Dirk Schlingemann.
\newblock Barycentric decomposition of quantum measurements in finite
  dimensions.
\newblock {\em Journal of Mathematical Physics}, 51(2):022111, 02 2010.

\bibitem{Lili17}
Fengjie Li and Pengtong Li.
\newblock {Radon--Nikodym theorems for operator-valued measures and continuous
  generalized frames}.
\newblock {\em Banach Journal of Mathematical Analysis}, 11(2):363 -- 381,
  2017.

\bibitem{PhysRevLett.89.137903}
J.~Eisert, S.~Scheel, and M.~B. Plenio.
\newblock Distilling {G}aussian states with {G}aussian operations is
  impossible.
\newblock {\em Phys. Rev. Lett.}, 89:137903, Sep 2002.

\bibitem{PhysRevA.66.032316}
G\'eza Giedke and J.~Ignacio~Cirac.
\newblock Characterization of {G}aussian operations and distillation of
  {G}aussian states.
\newblock {\em Phys. Rev. A}, 66:032316, Sep 2002.

\bibitem{Sanchez2007QuantumIW}
Raúl García-Patrón Sánchez and Nicolas~J. Cerf.
\newblock {\em Quantum information with optical continuous variables: from Bell
  tests to key distribution}.
\newblock PhD thesis, Université libre de Bruxelles, 2007.

\bibitem{PhysRevLett.116.250501}
Sergey Bravyi and David Gosset.
\newblock Improved classical simulation of quantum circuits dominated by
  {C}lifford gates.
\newblock {\em Phys. Rev. Lett.}, 116:250501, Jun 2016.

\bibitem{Bravyi2019simulationofquantum}
Sergey Bravyi, Dan Browne, Padraic Calpin, Earl Campbell, David Gosset, and
  Mark Howard.
\newblock Simulation of quantum circuits by low-rank stabilizer decompositions.
\newblock {\em {Quantum}}, 3:181, Sep 2019.

\bibitem{BlockPolyanskiy23}
Adam Block and Yury Polyanskiy.
\newblock The sample complexity of approximate rejection sampling with
  applications to smoothed online learning.
\newblock In Gergely Neu and Lorenzo Rosasco, editors, {\em Proceedings of
  Thirty Sixth Conference on Learning Theory}, volume 195 of {\em Proceedings
  of Machine Learning Research}, pages 228--273. PMLR, 12--15 Jul 2023.

\end{thebibliography}
\bibliographystyle{unsrt}

\appendix

\section{On Radon-Nikodym theorems for POVMs\label{sec:radonnikodymPOVM}}

Let~$(\Omega,\Sigma)$ be a measure space.
A positive operator-valued measure (POVM) is
a function~$M:\Sigma\rightarrow\cB(\cH)$ 
assigning a non-negative operator~$M(X)$ to every element~$X\in \Sigma$
subject to the conditions~$M(\emptyset)=0$, $M(\Omega)=I_{\cH}$ 
and satisfying~$\sigma$-additivity, i.e., $M\left(\bigcup_{j}X_j\right)=\sum_{j}M(X_j)$
for any countable family~$\{X_j\}_j\subset \Sigma $ of pairwise disjoint sets, with the sum converging 
in the weak operator topology.

The Radon-Nikodym property concerns the representability of an additive function on a~$\sigma$-algebra.
Following~\cite{robinson2014operator}, we call a POVM~$M$ {\em decomposable}
if there is a~$\sigma$-finite measure~$\lambda$ on~$(\Omega,\Sigma)$ 
and a
weakly 
$\lambda$-measurable  function~$Q:X\rightarrow\cB(\cH)$ with~$Q(x)\geq 0$ for every~$x\in X$ such that
\begin{align}
M(X)&=\int_{X}Q(x)d\lambda(x)\quad\textrm{ for every }\quad X\in \Sigma\ .
\end{align}
In other words,~$\Omega$ can be covered by countably many measurable sets of finite measure,  and~$\langle Q(\cdot)\psi,\varphi\rangle$ is~$\lambda$-measurable for all~$\psi,\varphi\in\cH$.  In the Ph.D. thesis~\cite{robinson2014operator} by Robinson, the following necessary and sufficient condition 
for decomposability was established, see~\cite[Theorem 3.3.2]{robinson2014operator} and the remarks following it.
The result of~\cite{robinson2014operator} generalizes an earlier result by
Berezanskii and
Kondratev~\cite{berezansky1995spectral} applicable only to the case where~$M(\Omega)$ is trace-class,
and the result of Chiribella et al.~\cite{Chiribellaetal10} which shows that decomposability is given if the Hilbert space~$\cH$ is finite-dimensional. 
\begin{theorem}\cite[Theorem 3.3.2, paraphrased]{robinson2014operator}
Let~$(\Omega,\Sigma)$ be a measure space and~$M:\Sigma\rightarrow\cB(\cH)$ a POVM on a separable Hilbert space~$\cH$. 
Let~$\lambda$ be a~$\sigma$-finite measure on~$(\Omega,\Sigma)$.
Then the following are equivalent:
\begin{enumerate}[(i)]
\item  $\|M(X)\|\leq \mu(X)$ for every~$X\in \Sigma$.
\item
There is a weakly~$\lambda$-measurable map~$Q:\Omega'\rightarrow\cB(\cH)$, 
defined on a set~$\Omega'\subset \Omega$ of full~$\lambda$-measure, with~$Q(x)\geq 0$ $\mu$-almost-everywhere in~$x$, such that
\begin{align}
M(X)&=\int_X Q(x)d\lambda(x)\quad\textrm{ for all }\quad X\in \Omega\ .
\end{align}
\end{enumerate}
\end{theorem}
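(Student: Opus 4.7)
The plan is to prove the equivalence by treating the two directions separately, with the substantive content concentrated in (i)$\Rightarrow$(ii). The reverse direction (ii)$\Rightarrow$(i) is relatively straightforward: if $M(X) = \int_X Q(x)\,d\lambda(x)$ with $Q(x) \geq 0$ $\lambda$-a.e., then the normalization $M(\Omega) = I$ together with positivity forces $\|Q(x)\| \leq 1$ for $\lambda$-almost every $x$ (otherwise a witness unit vector $\psi$ would give $\int_\Omega \langle \psi, Q(x)\psi\rangle\,d\lambda(x) > 1 = \langle \psi, I\psi\rangle$). A supremum over unit vectors then yields $\|M(X)\| \leq \lambda(X)$.

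For the direction (i)$\Rightarrow$(ii), I would reduce to the scalar (complex) Radon-Nikodym theorem via matrix elements. Using separability of $\cH$, fix a countable $\mathbb{Q}(i)$-linear dense subset $\{\psi_n\}_{n \in \mathbb{N}}$, and for each pair $(m,n)$ define the complex measure $\mu_{m,n}(X) := \langle \psi_m, M(X)\psi_n\rangle$. The polarization identity together with hypothesis (i) gives $|\mu_{m,n}(X)| \leq \|\psi_m\|\|\psi_n\|\lambda(X)$, so $\mu_{m,n} \ll \lambda$ with a density $q_{m,n} \in L^\infty(\lambda)$. Positivity of each $M(X)$ translates into non-negativity of $\sum_{k,l} \overline{c_k} c_l \, q_{n_k,n_l}(x)$ for $\lambda$-a.e. $x$, for each fixed finite rational combination $\sum_k c_k \psi_{n_k}$; the bound $\|M(X)\| \leq \lambda(X)$ translates similarly into an almost-everywhere bound on the corresponding quadratic form. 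Taking a countable union of null sets (one per rational combination) produces a single set $\Omega' \subseteq \Omega$ of full $\lambda$-measure on which all these scalar inequalities hold simultaneously, so the sesquilinear form $(u,v) \mapsto \sum_{m,n} \overline{u_m} v_n \, q_{m,n}(x)$ defined on the dense subspace is positive semidefinite and bounded by $\|u\|\|v\|$. By continuity this form extends uniquely to a bounded positive operator $Q(x)$ on $\cH$ with $\|Q(x)\| \leq 1$ for every $x \in \Omega'$. Verifying $M(X) = \int_X Q(x)\,d\lambda(x)$ then reduces to checking matrix elements against the dense family, where it holds by construction, and extends by density of $\{\psi_n\}$.

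The main obstacle is the tension between operator-valued pointwise statements and scalar-valued almost-everywhere statements: each $q_{m,n}$ is determined only up to a $\lambda$-null set, yet to define a genuine operator $Q(x)$ one must impose positivity and boundedness simultaneously for \emph{all} vectors at each point. Separability of $\cH$ is the essential hypothesis that makes this tractable, since it allows control of the exceptional set as a countable union of null sets indexed by rational combinations of the dense family. A secondary technical point is verifying weak $\lambda$-measurability of $x \mapsto Q(x)$; this follows from Pettis' measurability theorem once measurability of the matrix elements on a countable dense family has been established, again leveraging separability of $\cH$.
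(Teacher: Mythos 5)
The paper does not actually prove this statement -- it is quoted (with some typographical slips, e.g.\ $\mu$ where $\lambda$ is meant) from Robinson's thesis -- so there is no in-paper argument to compare against; I therefore assess your proposal on its own. Your sketch of (i)$\Rightarrow$(ii) is the standard and essentially correct route: apply the scalar Radon--Nikodym theorem to the complex measures $X\mapsto\langle\psi_m,M(X)\psi_n\rangle$ over a countable $\mathbb{Q}(i)$-linear dense family, take a countable union of null sets to enforce positivity and the quadratic-form bound simultaneously, and extend the resulting bounded positive sesquilinear form to an operator $Q(x)$ with $\|Q(x)\|\leq 1$. (Two small points: you should also add to the exceptional set the countably many conditions ensuring the form is well defined, i.e.\ independent of the representation of a vector as a rational combination of the $\psi_n$; and Pettis' theorem is not needed, since only \emph{weak} measurability is claimed and that follows directly from measurability of the matrix elements.)

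The genuine gap is in (ii)$\Rightarrow$(i). Your claim that $M(\Omega)=I$ together with $Q\geq 0$ forces $\|Q(x)\|\leq 1$ $\lambda$-a.e.\ is false, and the witness-vector argument does not work: the identity $\int_\Omega\langle\psi,Q(x)\psi\rangle\,d\lambda(x)=1$ is perfectly compatible with $\langle\psi,Q(x)\psi\rangle>1$ on a set of positive measure, because the excess can be compensated by values below $1$ elsewhere. Concretely, take $\cH=\mathbb{C}$, $\Omega=[0,1]$ with Lebesgue measure and $Q(x)=2x$: then $M(X)=\int_X 2x\,dx$ is a POVM with density $Q\geq 0$, yet $M([1/2,1])=3/4>1/2=\lambda([1/2,1])$, so (i) fails. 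This shows that (ii) \emph{as literally transcribed in the paper} does not imply (i); the correct form of Robinson's theorem includes $\|Q(x)\|\leq 1$ $\lambda$-a.e.\ as part of condition (ii) (your own (i)$\Rightarrow$(ii) construction produces exactly this bound), and with that clause restored the direction (ii)$\Rightarrow$(i) is the one-line estimate $0\leq\langle\psi,M(X)\psi\rangle=\int_X\langle\psi,Q(x)\psi\rangle\,d\lambda(x)\leq\lambda(X)$ for unit vectors $\psi$. You should either flag the missing clause and prove the corrected statement, or accept that this direction of your argument is wrong as written.
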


Important special cases are when~$\Omega=\mathbb{N}$: Here~$\lambda$ can be defined by
$\lambda(\{j\}) = M(\{j\})$ for~$j\in\mathbb{N}$.
On the other hand, if~$\Omega=\bigcup_{j\in\mathbb{N}}X_j$ is a countable union of subsets~$\{X_j\}_{j\in\mathbb{N}}\subset \Sigma$ with~$M(X_j)$ trace-class for each~$j\in\mathbb{N}$, then~$\lambda$ can be chosen as~$\lambda(X) = \tr(M(X))$. 

We refer to the more recent paper~\cite{Lili17} for additional references and  an in-depth review of the Radon-Nikodym property, including more general statements applicable to general Banach spaces (including non-separable Hilbert spaces, see e.g.,~\cite[Theorem 2.7]{Lili17}).

\section{Gaussian measurements \label{sec:gauss_measurement}}

Consider a Gaussian state~$\rho = \rho(\Gamma, d)$ on~$N$ bosonic modes with covariance matrix~$\Gamma$ and displacement vector~$d$.
A general Gaussian measurement corresponds to projecting the system into a pure Gaussian state~$\ket{\psi_G} = S(z) \ket{\alpha}$, with~$z \in [0,+\infty)^{N}$ and~$\alpha \in \bb{C}^N$, where~$S=\diag(z_1,1/z_1, \ldots, z_N,1/z_N)$ is the symplectic matrix associated with the squeezing operator~$S(z)$. The state~$\ket{\psi_G}$ has covariance matrix~$SS^T$ and displacement vector~$m = S d(\alpha)$. 
The probability density function defined by the measurement is \cite{PhysRevLett.89.137903,PhysRevA.66.032316} (see also e.g., \cite{Sanchez2007QuantumIW})
\begin{align}
    \label{eq:pm_gaussian}
    f( m ) = \frac{\exp\left(  -(m - d)^T (\Gamma + SS^T)^{-1} (m - d) \right)}{\pi^N \sqrt{\det((\Gamma + SS^T)/2)}}\ .
\end{align}

Note that a heterodyne Gaussian measurement corresponds to having~$z = (1,\ldots,1)$ and~$S=I$, while a homodyne measurement corresponds to taking the limit~$z \rightarrow (0,\ldots,0)$. Notice that
\begin{align}
    \lim_{z \rightarrow 0} \left(\Gamma + SS^T\right)^{-1} = (X\Gamma X)^{MP}
\end{align}
where~$X=\diag(1,0,1,0,\ldots)$ and~$MP$ denotes the Moore-Penrose inverse. 
The probability distribution for the outcome~$m = (x_1,0,\ldots,x_N,0)$ (where~$x_j$ is the outcome of a homodyne measurement on mode~$j$) is 
\begin{align}
    f( m ) = \frac{\exp\left(  -(m - d)^T (X\Gamma X)^{MP} (m - d) \right)}{\pi^N \sqrt{\det((X\Gamma X)/2)}} \ .
\end{align}

\section{Relation to the Gaussian extent of a state \label{sec:rank-extent}}

Consider a state~$\ket{\Psi}\in L^2(\mathbb{R}^N)$ of the form
\begin{align}
\ket{\Psi}&=\sum_{j=0}^{r-1}c_j \ket{\omega_j}\ ,
\end{align}
where~$r\in\mathbb{N}$ is arbitrary,~$\{\ket{\omega_j}\}_{j=0}^{r-1}$ are Gaussian states, and~$\{c_j\}_{j=0}^{r-1}$ are complex coefficients such that~$\ket{\Psi}$ is normalized. The so-called ``sparsification lemma'' of  Ref.~\cite{PhysRevLett.116.250501} roughly states the following (when translated to our setting): 
 if~$\|c\|_1^2$ is small, then the state~$\ket{\Psi}$ has a sparse approximation, i.e., there is a vector~$\ket{\Omega}\in L^2(\mathbb{R}^N)$  (not necessarily normalized) which
 \begin{enumerate}[(i)]
 \item is a linear combination of a small number of Gaussian states, and 
 \item is also close to~$\ket{\Psi}$. 
 \end{enumerate}
 We use the formulation given 
in \cite[Lemma 6]{Bravyi2019simulationofquantum}, which provides additional structural information on the form of~$\ket{\Omega}$. More precisely, \cite[Lemma 6]{Bravyi2019simulationofquantum} together with Markov's inequality imply that 
for~$\varepsilon>0$ and any integer 
\begin{align}
\chi\geq \frac{2\|c\|_1^2}{\varepsilon^2}\label{eq:chilowerboundcdelta}
\end{align}
there is a (unnormalized) vector~$\ket{\Omega}$ of the form
\begin{align}
\ket{\Omega} &=\frac{\|c\|_1}{\chi}\sum_{j=0}^{\chi-1}\ket{\psi_j}\ ,
\end{align} 
where each state~$\ket{\psi_j}$ for~$j\in \{0,\ldots,\chi-1\}$ is a Gaussian state (in fact, an element of~$\{\ket{\omega_k}\}_{k=0}^{r-1}$ up to a phase factor), and
\begin{align}
\|\Psi-\Omega\|\leq \varepsilon\ .\label{eq:normdifference}
\end{align}
(In \cite{Bravyi2019simulationofquantum}, it is shown that a randomly chosen element~$\ket{\Omega}$ from a certain ensemble of states satisfies these properties with probability at least~$1/2$. In fact, an additional stronger concentration bound is shown (giving a higher probability), however, this requires additional information about~$\ket{\Psi}$ in the form of a quantity called the stabilizer fidelity.)

We note that~\eqref{eq:normdifference} implies that
\begin{align}
\left|1-\|\Omega\|\right|\leq \varepsilon\ .\label{eq:normbounddist}
\end{align}
\begin{proof}
Because~$\|\Psi\|=1$ we have 
\begin{align}
\|\Psi-\Omega\|^2&= 1-2\mathsf{Re}\langle \Psi,\Omega\rangle+\|\Omega\|^2 \leq \varepsilon^2\ 
\end{align}
and
\begin{align}
\mathsf{Re}\langle \Psi,\Omega\rangle &\leq
 \left|\mathsf{Re}\langle \Psi,\Omega\rangle\right|\\
 &\leq |\langle\Psi,\Omega\rangle|\\
 &\leq \|\Omega\|
\end{align}
by the Cauchy-Schwarz inequality. This implies the claim.
\end{proof}

Consider the normalized state
\begin{align}
\ket{\Psi'}&=\frac{1}{\|\Omega\|}\ket{\Omega}\ .
\end{align}
By the triangle inequality as well as~\eqref{eq:normdifference} and~\eqref{eq:normbounddist}, we have
\begin{align}
\|\Psi-\Psi'\|&\leq \|\Psi-\Omega\|+
\|\Omega-\Psi'\|\\
&\leq \varepsilon+\left|\|\Omega\|-1\right| \\
\label{eq:psipsiprimedif} &\leq 2\varepsilon\ .
\end{align}

By definition, we have
\begin{align}
\ket{\Psi'}&=\sum_{j=0}^{\chi-1} c_j' \ket{\psi_j}\ \textrm{ where } c_j'=\frac{\|c\|_1}{\chi \|\Omega\|}\textrm{ for every }j\ .
\end{align}
With~\eqref{eq:chilowerboundcdelta} and~\eqref{eq:normbounddist} it follows that 
\begin{align}
    \|c'\|_2^2 = \frac{\|c\|_1^2}{\chi \|\Omega\|^2}
    \leq \frac{\varepsilon^2}{2\|\Omega\|^2}
    \leq \frac{\varepsilon^2}{2(1-\varepsilon)^2} \ .
\end{align}
In particular, we conclude that
\begin{align}
    \label{eq:cprimenormbound} \|c'\|_2^2\leq 2\varepsilon^2\quad\textrm{ for }\quad \varepsilon<1/2 
    \ . 
\end{align}

For a state~$\ket{\Psi}\in L^2(\mathbb{R}^N)$, the Gaussian extent~$\xi(\Psi)$  is defined as the infimum of~$\|c\|_1^2$ over all decompositions~$\ket{\Psi}=\sum_{j=0}^{r-1}c_j\ket{\omega_j}$ into Gaussian states~$\{\ket{\omega_j}\}_{j=0}^{r-1}$, where~$r\in\mathbb{N}$ is arbitrary (see e.g.,~ \cite{PhysRevA.110.042402,hahn2024classicalsimulationquantumresource}).
Considering a sequence of decompositions~$\ket{\Psi}=\sum_{j=0}^{r-1}c_j\ket{\omega_j}$ approaching this infimum, the above can be summarized as follows:
\begin{lemma}\label{lem:sparsificationmodified}
Let~$\ket{\Psi}\in L^2(\mathbb{R}^N)$. Let~$\varepsilon<1/2$. For any~$\chi>2\xi(\Psi)/\varepsilon^2$ (where the inequality is strict) there is a state
\begin{align}
\ket{\Psi'}&=\sum_{j=0}^{\chi-1} c_j' \ket{\psi_j}
\end{align}
where~$\{\ket{\psi_j}\}_{j=0}^{\chi-1}$ are Gaussian states such that 
\begin{align}
\|\Psi-\Psi'\|&\leq 2\varepsilon
\end{align}
and
\begin{align}
\|c'\|_2\leq \sqrt{2}\varepsilon\ .
\end{align}
\end{lemma}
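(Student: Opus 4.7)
The plan is to combine the definition of the Gaussian extent $\xi(\Psi)$ as an infimum with the constructions and estimates already carried out in the discussion preceding the lemma. Since the hypothesis $\chi>2\xi(\Psi)/\varepsilon^2$ is strict, there exists $\eta>0$ such that $\chi\geq 2(\xi(\Psi)+\eta)/\varepsilon^2$. By definition of the infimum, I can then pick a Gaussian decomposition $\ket{\Psi}=\sum_{j=0}^{r-1}c_j\ket{\omega_j}$ with $\|c\|_1^2\leq \xi(\Psi)+\eta$, so that the hypothesis~\eqref{eq:chilowerboundcdelta} of the sparsification result is satisfied with this choice of $\chi$.

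I would then apply~\cite[Lemma 6]{Bravyi2019simulationofquantum} (together with Markov's inequality, exactly as in the preceding paragraph) to this decomposition. This produces an unnormalized vector $\ket{\Omega}=(\|c\|_1/\chi)\sum_{j=0}^{\chi-1}\ket{\psi_j}$, where each $\ket{\psi_j}$ is a Gaussian state obtained as one of the $\ket{\omega_k}$ (up to a sign that is absorbed into the enumeration), and such that $\|\Psi-\Omega\|\leq\varepsilon$, i.e., relation~\eqref{eq:normdifference} holds.

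The candidate state is then the normalization $\ket{\Psi'}:=\ket{\Omega}/\|\Omega\|$, which automatically has the required form with uniform coefficients $c'_j=\|c\|_1/(\chi\|\Omega\|)$ for every $j\in\{0,\ldots,\chi-1\}$. The two claimed inequalities now follow immediately from the estimates already derived in the preceding text: the closeness bound $\|\Psi-\Psi'\|\leq 2\varepsilon$ is~\eqref{eq:psipsiprimedif}, obtained by applying the triangle inequality together with~\eqref{eq:normdifference} and~\eqref{eq:normbounddist}, while the coefficient bound $\|c'\|_2\leq\sqrt{2}\varepsilon$ is~\eqref{eq:cprimenormbound}, valid under the assumption $\varepsilon<1/2$ that is part of the lemma.

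The only subtle point is the handling of the strict versus non-strict inequality in the hypothesis. Because the extent is defined as an infimum and need not be attained, there is in general no decomposition of $\ket{\Psi}$ into Gaussian states with $\|c\|_1^2=\xi(\Psi)$; the margin supplied by the strict inequality $\chi>2\xi(\Psi)/\varepsilon^2$ is precisely what allows us to fall back on a near-optimal decomposition with $\|c\|_1^2$ slightly above $\xi(\Psi)$ while still verifying~\eqref{eq:chilowerboundcdelta}. Beyond this, no further technical obstacle is anticipated, since the probabilistic existence argument for $\ket{\Omega}$, the normalization step producing $\ket{\Psi'}$, and the triangle-inequality estimates have all been carried out in the main text and can be invoked directly.
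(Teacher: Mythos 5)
Your proof is correct and follows essentially the same route as the paper, which simply summarizes the preceding construction by ``considering a sequence of decompositions approaching the infimum.'' Your explicit handling of the strict inequality --- choosing $\eta>0$ with $\chi\geq 2(\xi(\Psi)+\eta)/\varepsilon^2$ and a near-optimal decomposition with $\|c\|_1^2\leq\xi(\Psi)+\eta$ --- is exactly the reason the lemma requires strictness, and makes the argument slightly more explicit than the paper's.
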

For concreteness, we define a state~$\ket{\Psi'}$ as in Lemma~\ref{lem:sparsificationmodified} with~$\chi=3\xi(\Psi)/\varepsilon^2$ an~$\varepsilon$-sparsification of~$\Psi$. 
Without loss of generality (by adjusting the phase of~$c_j'$), we may assume that~$\ket{\psi_j}=\ket{\psi_{\Gamma_j,d_j}}$ is the unique
Gaussian state with covariance matrix~$\Gamma_j$ and displacement vector~$d_j$ satisfying~$\langle \mathrm{\mathrm{vac}}, \psi_{\Gamma_j,d_j}\rangle>0$, for every~$j\in \{0,\ldots,\chi-1\}$.  Applying Theorem~\ref{thm:main} we obtain the following in a scenario where an~$\varepsilon$-sparsification of a state~$\ket{\Psi}$ is known, i.e., the parameters~$\{(c'_j,\Gamma_j,d_j)\}_{j=0}^{\chi-1}$ are given.
\begin{corollary}
\label{cor:gaussiansuperpositionsampling_extent}
Let~$\ket{\Psi}\in L^2(\mathbb{R}^N)$. Let~$\varepsilon<1/2$. There is a  probabilistic algorithm which takes as input a tuple~$\{(c'_j,\Gamma_j,d_j)\}_{j=0}^{\chi-1}$
describing an~$\varepsilon$-sparsification~$\ket{\Psi'}$ of the state~$\ket{\Psi}$, and outputs a sample~$\zeta_\Psi^M$ from a distribution~$8\varepsilon$-close to~$\mu_\Psi^M$ in~$L_1$-norm with probability at least~$1-\delta$. 
The algorithm has runtime upper bounded by~$O(N^3 \xi(\Psi)^3 \varepsilon^{-4} \log1/\delta)$.
\end{corollary}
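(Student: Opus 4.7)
The plan is to apply Corollary~\ref{cor:gaussiansuperpositionsampling} directly to the sparsification~$\ket{\Psi'}$, and then to separately account for the statistical error between~$\mu^M_{\Psi'}$ and the true target~$\mu^M_\Psi$. Thus the proof decomposes into a runtime/success bound coming from the sampling theorem applied to~$\ket{\Psi'}$, and a distributional closeness bound coming purely from~$\|\Psi-\Psi'\|$.

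First, I would feed the input tuple~$\{(c'_j,\Gamma_j,d_j)\}_{j=0}^{\chi-1}$ into the algorithm of Corollary~\ref{cor:gaussiansuperpositionsampling}. That corollary produces a sample from~$\mu^M_{\Psi'}$ except with probability at most~$\delta$, in time~$O(N^3\chi^3\|c'\|_2^2\log1/\delta)$. Substituting the sparsification parameters guaranteed by Lemma~\ref{lem:sparsificationmodified}, namely~$\chi=3\xi(\Psi)/\varepsilon^2$ and~$\|c'\|_2^2\leq 2\varepsilon^2$, one obtains
\begin{align}
\chi^3\|c'\|_2^2 \leq 54\, \xi(\Psi)^3/\varepsilon^4,
\end{align}
which yields the claimed runtime~$O(N^3\xi(\Psi)^3\varepsilon^{-4}\log1/\delta)$.

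Second, I would bound the distance between~$\mu^M_{\Psi'}$ and~$\mu^M_\Psi$. By the data-processing property of trace distance under POVMs, we have~$\TVD(\mu^M_\Psi,\mu^M_{\Psi'}) \leq \tfrac{1}{2}\|\proj{\Psi}-\proj{\Psi'}\|_1=\sqrt{1-|\langle\Psi,\Psi'\rangle|^2}$. A short calculation using~$\|\Psi-\Psi'\|^2 = 2-2\Re\langle\Psi,\Psi'\rangle\leq 4\varepsilon^2$ gives~$\Re\langle\Psi,\Psi'\rangle\geq 1-2\varepsilon^2$, hence~$|\langle\Psi,\Psi'\rangle|^2\geq (1-2\varepsilon^2)^2$ and so~$\TVD(\mu^M_\Psi,\mu^M_{\Psi'})\leq 2\varepsilon$. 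Translating to the~$L^1$-norm on densities (a factor of two) and absorbing the failure event (which shifts the distribution by at most its probability in~$L^1$) yields the claimed closeness, with the constant~$8$ providing some slack.

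I do not anticipate any real obstacle: the technical machinery has all been assembled in Theorem~\ref{thm:main}, Corollary~\ref{cor:gaussiansuperpositionsampling}, and Lemma~\ref{lem:sparsificationmodified}, and the remainder is a standard data-processing argument for pure-state trace distance followed by direct arithmetic with the sparsification parameters. The only mild subtlety is to be careful that the algorithm of Corollary~\ref{cor:gaussiansuperpositionsampling} does not require~$\ket{\Psi'}$ itself, only the classical parameters describing its Gaussian components, which is precisely what is provided as input.
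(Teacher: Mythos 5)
Your proposal is correct and follows the same overall decomposition as the paper: run the algorithm of Corollary~\ref{cor:gaussiansuperpositionsampling} on the classical description of~$\ket{\Psi'}$, plug in~$\chi=3\xi(\Psi)/\varepsilon^2$ and~$\|c'\|_2^2\leq 2\varepsilon^2$ for the runtime, and separately bound the distance between~$\mu^M_{\Psi'}$ and~$\mu^M_\Psi$. The only place you diverge is the closeness bound. The paper estimates~$|\mu^M_\Psi(X)-\mu^M_{\Psi'}(X)|\leq|\langle\Psi,M(X)(\Psi-\Psi')\rangle|+|\langle\Psi-\Psi',M(X)\Psi'\rangle|\leq 2\|\Psi-\Psi'\|\leq 4\varepsilon$ directly from Cauchy--Schwarz and~$M(X)\leq I$, giving~$\|\mu^M_{\Psi'}-\mu^M_\Psi\|_1\leq 8\varepsilon$; you instead invoke data processing for the trace distance together with the pure-state formula~$\tfrac12\|\proj{\Psi}-\proj{\Psi'}\|_1=\sqrt{1-|\langle\Psi,\Psi'\rangle|^2}$, which gives~$\TVD\leq 2\varepsilon$ and hence the tighter~$\|\mu^M_{\Psi'}-\mu^M_\Psi\|_1\leq 4\varepsilon$. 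Your route buys a factor of two in the constant at the cost of importing two standard but external facts (monotonicity of trace distance under POVMs with continuous outcome sets, and the Fuchs--van de Graaf-type identity for pure states), whereas the paper's argument is entirely self-contained. One small remark: the final step about ``absorbing the failure event'' into the~$L^1$ distance is unnecessary -- the corollary's~$\delta$ and~$8\varepsilon$ are accounted for separately (conditioned on acceptance the output is distributed exactly as~$\mu^M_{\Psi'}$), so no such absorption is needed.
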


\begin{proof}
We have
\begin{align}
    &|\mu_\Psi^M(X) - \mu_{\Psi'}^M(X)| \\
    &= |\braket{\Psi, M(X) \Psi} - \braket{\Psi', M(X) \Psi'}| \\
    &\leq |\braket{\Psi, M(X) \Psi} - \braket{\Psi, M(X) \Psi'}| \\
    &\qquad + | \braket{\Psi, M(X) \Psi'} - \braket{\Psi', M(X) \Psi'}| \\
    &\leq |\langle \Psi,M(X)(\Psi-\Psi')\rangle|+|\langle \Psi-\Psi',M(X)\Psi'\rangle \\
    &\leq 2\|\Psi-\Psi'\| \\
    &\leq 4 \varepsilon
\end{align}
for any~$X\in\Sigma$,
where we applied the triangle and the Cauchy-Schwarz inequalities, and where we used~$M(X)\leq I$ and \cref{eq:psipsiprimedif}. 
As a consequence
\begin{align}
\|\mu_{\Psi'}^M - \mu_{\Psi}^M\|_1 = 2 
        \sup_{X\in\Sigma}| \mu_{\Psi'}^M(X) - \mu_{\Psi}^M(X) |
        \leq 8 \varepsilon \ .
\end{align}
Therefore the algorithm in \cref{cor:gaussiansuperpositionsampling} with input~$\{(c'_j,\Gamma_j,d_j)\}_{j=0}^{\chi-1}$ produces a sample from a distribution~$8\varepsilon$-close to~$\mu_{\Psi}^M$ in~$L_1$-norm with success probability at least~$1-\delta$ in time~$O(N^3\chi^3\|c'\|_2^2 \log1/\delta)$. From the definition of an~$\varepsilon$-sparsification we have~$\chi = 3\xi(\Psi)/\varepsilon^2$. This together with \cref{eq:cprimenormbound} gives the upper bound for the runtime in the claim.  
\end{proof}

\section{Rejection-sampling\label{sec:rejectionsamplingapproach}}
We consider a measure space~$(\Omega,\Sigma)$ where~$\Sigma$ denotes a~$\sigma$-algebra on the set~$\Omega$. Rejection sampling is a well-established technique, which can be succinctly summarized as follows:
\begin{lemma}\cite[Lemma 26, paraphrased]{BlockPolyanskiy23}\label{lem:blockpolyanskiy}
Let~$\mu, \nu$ be two measures on~$(\Omega, \Sigma)$ and suppose that~$X \sim \mu$. Suppose that~$\mu, \nu$ are such that~$\left\|\frac{d \nu}{d \mu}\right\|_{\infty} \leq M$ for some~$M<\infty$. 
Let~$\xi$ be a binary-valued random variable such that conditioned on~$X$, the probability that~$\xi=1$ is~$\frac{1}{M} \frac{d \nu}{d \mu}(X)$. Then~$\Pr(\xi=1)=\frac{1}{M}$ and
$$
\Pr(X \in A \mid \xi=1)=\nu(A)\qquad\textrm{ for any }A\in \Sigma\ .
$$
In particular, if~$X_1, \ldots, X_n$ are sampled independently and identically with~$X_r\sim \mu$, and~$\zeta_1, \ldots, \zeta_n$ are constructed as above,
then at least one of~$\{\zeta_j\}_{j=1}^n$ is equal to~$1$ with probability at least~$1-e^{-\frac{n}{M}}$.
\end{lemma}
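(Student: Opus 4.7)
The plan is a direct computation: determine the marginal acceptance probability $\Pr(\xi=1)$ and the joint law of $(X,\xi)$ on $\{\xi=1\}$, then take a ratio to read off the conditional distribution of $X$ given acceptance. The iid part is then a one-line independence argument.

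First I would apply the tower property of expectation. Writing $\varphi(x):=\tfrac{1}{M}\tfrac{d\nu}{d\mu}(x)$ for the acceptance probability conditioned on $X=x$, the marginal acceptance probability satisfies
\begin{align}
\Pr(\xi=1) = \mathbb{E}_{X\sim\mu}[\varphi(X)] = \int_\Omega \frac{1}{M}\frac{d\nu}{d\mu}(x)\, d\mu(x) = \frac{\nu(\Omega)}{M} = \frac{1}{M},
\end{align}
using in the last equality that $\nu$ is a probability measure. Note that the hypothesis $\|d\nu/d\mu\|_\infty \leq M < \infty$ presupposes $\nu\ll\mu$, so the Radon-Nikodym derivative is well-defined $\mu$-almost-everywhere, which is all that is needed for these integrals to make sense.

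Next, I would repeat the same manipulation on an arbitrary $A\in\Sigma$ to compute the joint probability
\begin{align}
\Pr(X\in A,\xi=1) = \int_A \varphi(x)\, d\mu(x) = \frac{1}{M}\int_A \frac{d\nu}{d\mu}(x)\, d\mu(x) = \frac{\nu(A)}{M}.
\end{align}
Dividing by $\Pr(\xi=1)=1/M$ immediately gives $\Pr(X\in A\mid \xi=1)=\nu(A)$, which is the first claim.

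For the final statement about $n$ iid samples $X_1,\dots,X_n\sim \mu$, I would observe that because each $\xi_r$ is determined by $X_r$ together with an independent auxiliary coin flip, the $\xi_r$ are iid Bernoulli$(1/M)$. Therefore the probability that \emph{none} of the samples is accepted is
\begin{align}
\Pr(\xi_1=\cdots=\xi_n=0) = (1-1/M)^n \leq e^{-n/M},
\end{align}
where the inequality uses the elementary bound $1-t\leq e^{-t}$. The complementary event is that at least one $\xi_r$ equals $1$, giving the stated probability bound. The only non-routine aspect of the argument is the (standard) measure-theoretic justification that $\varphi$ being defined only $\mu$-a.e.\ does not affect the integrals above; this is the closest thing to an obstacle, but it causes no real trouble since $\nu\ll\mu$ is built into the hypothesis.
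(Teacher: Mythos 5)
Your proof is correct and is the standard rejection-sampling argument; the paper itself does not prove this lemma but defers to the cited reference \cite{BlockPolyanskiy23}, whose Lemma~26 is established in essentially the same way (tower property for the acceptance probability, the joint-law computation $\Pr(X\in A,\xi=1)=\nu(A)/M$, and the bound $(1-1/M)^n\leq e^{-n/M}$ for the iid part). Your explicit remark that $\Pr(\xi=1)=1/M$ uses $\nu(\Omega)=1$ — a hypothesis only implicit in the statement's phrasing ``two measures'' — is a worthwhile clarification.
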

We refer to~\cite{BlockPolyanskiy23} for a proof of this statement.

It is worth noting that the average number of samples required to obtain one successful event (which corresponds to obtaining $\xi=1$) is given by 
\begin{align}
    &\bb{E}[\text{number of samples until success}] \\
    &\qquad= \Pr(\xi=1) \sum_{j=1}^\infty j \left(1-\Pr(\xi=1)\right)^{j-1} \\
    \label{eq:avg_number_samples} &\qquad= \frac{1}{\Pr(\xi=1)} = M \ .
\end{align}

\section{Proof of the pinching inequality\label{sec:appendixpinching}}
For completeness, we include a proof of Claim~\ref{lem:upperboundm}, i.e., the pinching inequality established in~\cite{MasahitoHayashi_2002}.
\begin{proof}
Define~$V_j=\sum_{j=0}^{\chi-1} e^{2 \pi i j k / \chi} P_k$. 
For any operator~$\rho\geq 0$ we have 
\begin{align}
    \sum_{j=0}^{\chi-1} V_j \rho V_j^\dagger&=\sum_{j=0}^{\chi-1} \sum_{k=0}^{\chi-1} \sum_{\ell=0}^{\chi-1} e^{2 \pi i j(k-\ell) / \chi} P_k \rho P_\ell \\
    & =\sum_{k,\ell=0}^{\chi-1}\Bigg(\underbrace{\sum_{j=0}^{\chi-1} e^{2 \pi i j(k-\ell) / \chi}}_{\chi \delta_{k,\ell}}\Bigg) P_k \rho P_\ell \\
    & =\chi \sum_{k=0}^{\chi-1} P_k \rho P_k \ .
\end{align}
For~$\rho \geq 0$  we have 
\begin{align}
    V_{j }\rho V_j^\dagger \geq 0\qquad\textrm{ for all }j \in \{0,\ldots, \chi-1\}
\end{align}
hence
\begin{align}
    V_0 \rho V_0^\dagger+\underbrace{\sum_{j=1}^{\chi-1} V_j \rho V_j^\dagger}_{\geq 0}=\chi \sum_{k=1}^{\chi-1} P_k \rho_k P_k\ .
\end{align}
It follows that 
\begin{align}
    V_0 \rho V_0^\dagger \leq  \chi \sum_{k=0}^{\chi-1} P_k \rho P_k\ .
\end{align}

Take~$\rho=\proj{\Psi}$. It remains to show that~$|\Psi\rangle\langle\Psi| = V_0|\Psi\rangle\langle\Psi| V_0^\dagger$.
We have
\begin{align}
    V_0|\Psi\rangle\langle\Psi| V_0^\dagger&=\sum_{\ell, m=0}^{\chi-1} c_\ell \overline{c_m}  V_0\left|\psi_\ell\right\rangle\left\langle\psi_m\right| V_0^\dagger\ .
\end{align}
By definition, we have 
\begin{align}
    V_0&=\sum_{k=0}^{\chi-1} P_k = P_\ell+Q_\ell
\end{align}
for every~$\ell\in \{0,\ldots,\chi-1\}$, where~$Q_\ell:=\sum_{k=0, k\neq \ell }^{\chi-1} P_k$.
It follows that
\begin{align}
    V_0|\Psi\rangle\langle\Psi| V_0^\dagger&=\sum_{\ell, m=0}^{\chi-1} c_\ell \overline{c_m}  (P_\ell+Q_\ell)\left|\psi_\ell\right\rangle\left\langle\psi_m\right| (P_m+Q_m)\\
    &=\left(\ket{\Psi}+\ket{\Theta}\right)\left(\bra{\Psi}+\bra{\Theta}\right)
\end{align}
where we defined
\begin{align}
    \ket{\Theta}&=\sum_{\ell=0}^{\chi-1}Q_\ell c_\ell \ket{\psi_\ell}\ .
\end{align}
The claim follows since~$\ket{\Theta}=0$ because the states~$\ket{\psi_\ell}$ are orthogonal. 

\end{proof}

\end{document}